\tikzstyle{red dot}=[fill=red, draw=none, shape=circle]
\tikzstyle{green dot}=[fill={rgb,255: red,70; green,176; blue,20}, draw=none, shape=circle]
\tikzstyle{Resize}=[font={\scriptsize}]
\tikzstyle{Big}=[font={\huge}]
\tikzstyle{Photon}=[fill={rgb,255: red,255; green,230; blue,103}, draw=none, shape=circle]
\tikzstyle{Big node}=[fill={rgb,255: red,191; green,191; blue,191}, draw=black, shape=circle]
\tikzstyle{small node}=[fill={rgb,255: red,191; green,191; blue,191}, draw=black, shape=circle, inner sep=0, minimum size=3pt]
\tikzstyle{Blue dot}=[fill=blue, draw=black, shape=circle]
\tikzstyle{Fill red}=[-, fill={rgb,255: red,255; green,184; blue,184}]
\tikzstyle{dashes}=[-, dashed, dash pattern=on 1mm off 1mm]
\tikzstyle{Fill grey}=[-, fill={rgb,255: red,166; green,166; blue,166}]
\tikzstyle{Thick}=[-, thick]
\tikzstyle{Arrow}=[->]
\tikzstyle{red line}=[-, fill=none, draw=red]
\tikzstyle{Blue line}=[-, fill=none, draw=blue]
\tikzstyle{Gray line}=[-, fill=none, draw=gray]
\tikzstyle{Thick arrow}=[thick, ->]
\tikzstyle{Fill pink}=[-, fill={rgb,255: red,255; green,140; blue,140}]
\tikzstyle{dashed arrow}=[->, dashed, dash pattern=on 1mm off 0.5mm]
\tikzstyle{arrow}=[->, very thick, draw=red]
\tikzstyle{dashes red}=[-, dashed, dash pattern=on 1mm off 1mm, fill=none, draw=red]
\tikzstyle{dashes blue}=[-, dashed, dash pattern=on 1mm off 1mm, fill=none, draw=blue]
\tikzstyle{Thick dashed arrow}=[->, thick, dashed, dash pattern=on 1mm off 0.5mm]
\tikzstyle{Fill green}=[-, fill={rgb,255: red,70; green,176; blue,20}]
\tikzstyle{Fill real red}=[-, fill=red]
\newcommand{\N}{\mathbb{N}}
\newcommand{\R}{\mathbb{R}}
\newcommand{\C}{\mathbb{C}}
\newcommand{\1}{\mathds{1}}
\newcommand{\mycomment}[1]{}
\newtheorem{thm}{Theorem}
\begin{document}

\preprint{APS/123-QED}

\title{Measuring entanglement along collective operators}

\author{\'Eloi Descamps\orcidlink{0000-0002-6911-452X}}
\email{eloi.descamps@u-paris.fr}
\affiliation{Universitée Paris Cité, CNRS, Laboratoire Matériaux et Phénomènes Quantiques, 75013 Paris, France}
\author{Arne Keller\orcidlink{0000-0002-6934-7198}}
\affiliation{Universitée Paris Cité, CNRS, Laboratoire Matériaux et Phénomènes Quantiques, 75013 Paris, France}
\affiliation{Département de Physique, Université Paris-Saclay, 91405 Orsay Cedex, France}
\author{Pérola Milman\orcidlink{0000-0002-7579-7742}}
\affiliation{Universitée Paris Cité, CNRS, Laboratoire Matériaux et Phénomènes Quantiques, 75013 Paris, France}

\date{\today}

\begin{abstract}
We introduce a framework for studying multiparty entanglement by analyzing the behavior of collective variables. Throughout the manuscript, we explore a specific type of multiparty entanglement that can be detected through the fluctuations of a collective observable. We thoroughly analyze its properties and how it can be extended to mixed states while placing it within the context of the existing literature. The novelty of our approach also lies in presenting a graphical point of view. This is done by introducing a spectral space on which the various properties of our entanglement quantifier have a direct pictorial interpretation. Notably, this approach proves particularly effective for assessing $k$-entanglement, as we show its ability to extend previously established inequalities. To enhance understanding, we also demonstrate how this framework applies to specific scenarios, encompassing both finite-dimensional cases and infinite-dimensional systems, the latter being exemplified by the time-frequency modal degree of freedom of co-propagating single photons.
\end{abstract}

\maketitle

%\tableofcontents

\section{Introduction}\label{section: intro}
Entanglement stands as a cornerstone of quantum systems, offering a distinct advantage over classical counterparts in various applications such as sensing \cite{hyllus_fisher_2012, toth_quantum_2014,pezze_quantum_2014}, computing \cite{nielsen_quantum_2010,menicucci_universal_2006,shor_polynomial-time_1997}, and communication \cite{bennett_quantum_2014,gisin_quantum_2002}. Despite extensive studies, the exploration of entanglement remains an ongoing endeavor, particularly in the realm of multipartite entanglement quantification. While bipartite entanglement is well-understood \cite{horodecki_separability_1996,sperling_necessary_2009}, the treatment of multipartite entanglement, even for pure states, lacks consensus, with various perspectives depending on different system partitions \cite{szalay_multipartite_2015} or a global standpoint \cite{hyllus_fisher_2012,toth_multipartite_2012} (see \cite{horodecki_quantum_2009,amico_entanglement_2008} for reviews). In this context, searching for quantifiers of entanglement \cite{plenio_introduction_2006,ma_multipartite_2024} and entanglement witnesses \cite{li_entanglement_2013,laurell_witnessing_2024,akbari-kourbolagh_entanglement_2019} allows for a better understanding of the multiple facets of entanglement.

In \cite{guhne_multipartite_2005}, it was proposed that the collective spin operators' variance serves as a meaningful numerical quantity indicative of multipartite entanglement in the context of spin systems. This observation was later generalized to a broader class of systems \cite{pezze_quantum_2014,chen_wigner-yanase_2005,hong_detection_2021}, albeit limited to finite-dimensional ones, or with complex generalized constructions \cite{hong_detecting_2015}. This paper aims to present a unified framework that extends and generalizes these previous findings. We introduce a novel quantifier inspired by prior works and explore its extension to mixed states.

A specific facet of multipartite entanglement is $k$-entanglement \cite{guhne_entanglement_2009}, where the introduced quantifier finds particular relevance. Building upon the arguments and derivations of \cite{hyllus_fisher_2012,chen_wigner-yanase_2005}, we establish that our quantifier can also furnish information, similar to a known inequality, whose violation serves as a hallmark of $k$-entanglement. We present a more general theorem and discuss variations of this inequality, accounting for scenarios where prior knowledge of collective entanglement is available.

To elucidate the connection between our framework and existing literature, a significant portion of the paper is dedicated to illustrating how our general approach can be applied to concrete situations. We delve into two specific scenarios: finite-dimensional Hilbert spaces and the realm of time-frequency single photon states \cite{fabre_time_2022}.

The paper unfolds as follows: in Section \ref{sec: system studied}, we present the general context as well as the different examples used as illustrations throughout the paper. In Section \ref{sec: entanglement measure}, we introduce the fundamentals of the variance of a collective operator and demonstrate how it can be used to introduce a new entanglement witness. Finally, in Section \ref{sec: k sep}, we discuss how this entanglement witness is particularly relevant to the study of $k$-entanglement and provide a statement that generalizes the existing literature.

\section{Systems studied}\label{sec: system studied}
We consider $\mathcal H$ to be a Hilbert space (finite or infinite-dimensional) and we fix $\hat H$ to be any Hamiltonian acting on $\mathcal H$. For simplicity, we will assume throughout the paper that the eigenvalues of $\hat H$ are non-degenerate. For an integer $n$, we consider the $n$-partite space $\mathcal H^{\otimes n}$. We call $\hat H_i$ the operator obtained from $\hat H$ that acts on the $i$-th party: $\hat H_i=\hat{\1}^{\otimes (i-1)}\otimes \hat H \otimes \hat{\1}^{\otimes (n-i)}$. We can finally introduce 
\begin{equation}
    \hat H_\text{coll}=\sum_{i=1}^n \hat H_i,
\end{equation}
the collective operator acting on the whole system. Now, we discuss different settings that fit into this picture. Throughout the paper, we will refer to these examples for illustration purposes. Moreover, depending on the situation, the specific properties of these examples will be used to allow further investigations.

\subsection{Finite dimensional setting}\label{subsec: finite dim setting}
The first example we will describe is the case of a finite-dimensional $\mathcal H$. This situation implies that the operator $\hat H$ has only a finite number of eigenvalues, meaning in particular that the spectrum of $\hat H$ is bounded. Thus, in the following, $\lambda_{\max}$ and $\lambda_{\min}$ will respectively designate the maximal and minimal eigenvalues of $\hat H$.\\

A quintessential example of this scenario is when $\mathcal{H} = \mathbb{C}^2$ and $\hat{H} = \hat{Z}$, where $\hat{Z}$ is the Pauli-$Z$ matrix. This corresponds to the spin-$\frac{1}{2}$ case. In this context, $\hat{H}_\text{coll} = \sum_{i=1}^n \hat{Z}_i$ represents the sum of spin operators. This situation encapsulates the study of $n$ spin-$\frac{1}{2}$ particles, which is the framework where the concepts presented in this paper were originally developed \cite{guhne_multipartite_2005}.

\subsection{Time-frequency case}\label{subsec: time-frequency case}
 Opposite to finite-dimensional systems are infinite-dimensional ones. We will thus develop an example of one such situation. We consider the space of single-photon time-frequency states $\mathcal S$. We recall that a single-photon time-frequency state $\ket{\psi}$ is characterized by its normalized spectrum $S$ such that 
\begin{equation}
    \ket{\psi}=\int \dd \omega S(\omega)\hat a^\dagger(\omega)\ket{\text{vac}},
\end{equation}
with $S(\omega)\in\C$ and $\int \dd\omega \abs{S(\omega)}^2=1$. In this equation, $\hat a^\dagger(\omega)$ is the operator that creates a photon at frequency $\omega$. It satisfies the usual bosonic commutation relation $[\hat a(\omega),\hat a(\omega')]=0$ and $[\hat a(\omega),\hat a^\dagger(\omega')]=\delta(\omega-\omega')\hat{\1}$. $\ket{\text{vac}}$ is the vacuum state. Following Ref.~\cite{fabre_time_2022}, we define the frequency operator $\hat\omega$ as
\begin{equation}
    \hat\omega=\int \dd \omega \omega \hat a^\dagger(\omega)\hat a(\omega),
\end{equation}
whose action on the state $\ket{\psi}$ is given by $\hat \omega\ket{\psi}=\int \dd\omega \omega S(\omega)\hat a^\dagger(\omega)\ket{\text{vac}}$. As a shorthand, we also introduce the notation $\ket{\omega}=\hat a^\dagger(\omega)\ket{\text{vac}}$. The operator $\hat\omega$ has particular relevance for metrological considerations. Indeed, it is the generator of time evolution. As such, the quantum Cramér-Rao bound \cite{braunstein_statistical_1994,braunstein_generalized_1996} implies that the spectral variance $\Delta\hat \omega$ is a fundamental measure of the absolute precision of time delay measurement using the state $\ket{\psi}$ as a probe.\\

The case of $n$ independent time-frequency single photon states is described by the space $\mathcal S_n=\mathcal S^{\otimes n}$. We denote the corresponding collective operator
\begin{equation}
    \hat\Omega=\hat\omega_1+\cdots+\hat\omega_n.
\end{equation}

As previously mentioned, indices are added on the operators to specify on which system they are acting. It is worthwhile to note that although we use the less known example of time-frequency variables for the study of continuous variable systems, the same argument also holds for the usual quadrature variables $(\hat p,\hat q)$. Indeed, as shown in \cite{fabre_time_2022}, both formalisms can be mapped one to the other through $\hat \omega\leftrightarrow \hat p$ and $\hat t\leftrightarrow \hat q$. 

\subsection{Spectral space}\label{subsec: spectral space}
 To provide graphical representations, we introduce what we call the spectral space. We first make a general definition and then discuss how this relates to the two examples of finite dimension and time-frequency states. Let's denote by $\Lambda\subset \R$ the spectrum of $\hat H$. As we have assumed that it is non-degenerate, to $\lambda\in\Lambda$ we can associate the corresponding eigenstate $\ket{\psi_\lambda}$. All the states of the form $\ket{\psi_{\lambda_1}}\otimes\cdots\otimes\ket{\psi_{\lambda_n}}$ for $\lambda_i\in\Lambda$ provide a basis of $\mathcal H^{\otimes n}$. As such any state $\ket{\psi}$ in $\mathcal H^{\otimes n}$ can be expanded as
 \begin{equation}
     \ket{\psi}=\int\limits_{\lambda_1,\dots,\lambda_n\in\Lambda} S(\lambda_1,\dots,\lambda_n)\ket{\psi_{\lambda_1}}\otimes\cdots\otimes\ket{\psi_{\lambda_n}} \dd^n \va{\lambda}.
 \end{equation}
To simplify the notation, we write for $\va{\lambda}=(\lambda_1,\dots,\lambda_n)\in\Lambda^n$, $\ket{\psi_{\va{\lambda}}}=\ket{\psi_{\lambda_1}}\otimes\cdots\otimes\ket{\psi_{\lambda_n}}$. Notice that in the case of a finite-dimensional $\mathcal H$, the integral corresponds to a discrete sum. Here $S: \Lambda^n\to \C$ is a square-normalized function. We call the set $\Lambda^n$ the spectral space associated with $\hat H_\text{coll}$ and the function $S$ the spectral amplitude. For small values of $n$, the function $S$ can be plotted. As we will see below, the exact values taken by $S$ do not matter for our studies. Only the knowledge of the support of $S$ ({\it i.e.}, the values of $\lambda_1,\dots,\lambda_n$ for which $S$ is non-zero) is relevant. Below, we will make such plots for the value of $n=2$.\\

It is important to notice that the spectral intensity $\abs*{S(\va{\lambda})}^2$ can be seen as a probability distribution over $\Lambda^n$. In this context, expectation values of quantum observables that are diagonal in the basis $\left\{\ket{\psi_{\va{\lambda}}}\right\}$ can be seen as the expectation value of classical random variables defined over the spectral space $\Lambda^n$. More explicitly, to each operator $\hat H_i$ we associate the random variable $H_i:\Lambda^n\to \R$ defined by $H_i(\va{\lambda})=\lambda_i$. As for collective operators defined by linear combinations, the same operations can be used to define the corresponding variable. For example, $H_\text{coll}=\sum_{i=1}^n H_i$ is the variable associated to the quantum operator $\hat H_\text{coll}$. An immediate computation shows that for a state $\ket{\psi}$ with spectral amplitude $S$
\begin{equation}
    \bra{\psi}\hat A\ket{\psi}= \expval{A},
\end{equation}
where the left-hand side is the quantum expectation value of an operator $\hat A$ diagonal in the basis $\left\{\ket{\psi_{\va{\lambda}}}\right\}$ and the right-hand-side is the expectation value of associated classical random variable $A$, expressed as $\expval{A}=\int_{\va{\lambda}\in\Lambda^n} A(\va{\lambda})\abs*{S(\va{\lambda})}^2\dd^n \va{\lambda}$. An important example of this is given by the following. For any normalized vector $\va{v}=(v_1,\dots,v_n)$ we can associate an operator $\hat V=\sum_{i=1}^n v_i \hat H_i$ and the corresponding variable $V=\sum_{i=1}^n v_i H_i$. As such the quantum expectation value equals $\expval*{\hat V}=\expval{V}$.\\

We denote by $\Delta_{\ket{\psi}}(\hat A)= \bra{\psi}\hat A^2\ket{\psi}-\bra{\psi}\hat A\ket{\psi}^2$ the quantum variance of $\hat A$ taken on the pure state $\ket{\psi}$. As they are built from expectation values, the same goes for variances: we have $\Delta \hat V=\Delta V$. As the right-hand side is simply the variance of the distribution $\abs*{S}^2$ in the direction given by $\va{v}$, this gives a simple intuition for the term $\Delta \hat V $: it is a measure of the ``thickness" of the spectral amplitude in the given direction. This will be particularly relevant for the operator $\hat H_i$ (local direction) and $\hat H_\text{coll}$ (collective or diagonal directions). \\

In the case of finite-dimensional $\mathcal H$, the spectral space $\Lambda^n$ is a discrete subspace of $\R^n$. It corresponds to an irregular finite rectangular grid. In this case, the graphical representation of the support of the spectral amplitude $S$ of a state 
$\ket{\psi}$ is a finite collection of points (see Fig.~\ref{fig: example spectral space} left for $n=2$). Opposite to this situation is the case of time-frequency states. Indeed, the spectrum of $\hat \omega$ is $\Lambda=\R$ (the eigenstate associated with the eigenvalue $\lambda$ is the state $\ket{\lambda}$ of a photon at frequency $\lambda$). In this case, the function $S$ is defined over $\R^n$. This means that the representation of the support of a time-frequency state can have any continuous shape (see Fig.~\ref{fig: example spectral space} Right for $n=2$).
\begin{figure*}
    \centering
    \scalebox{1.1}{\tikzfig{example_spectral-space}}
    \caption{Examples of the representation of the support of the spectrum in the bipartite $n=2$ case. On the left: the case of finite dimension. On the right: the case of time-frequency variables. On the left panel, crosses represent the spectral space (all the possible couples $(\lambda_1,\lambda_2)$ of eigenvalues of $\hat H$), and the red filled circles represent the support of a particular state.}
    \label{fig: example spectral space}
\end{figure*}
 
\section{Entanglement quantifier}\label{sec: entanglement measure}
With the definition of the setting in place, we can now go to the crux of the paper.
\subsection{Inequality and definition}\label{subsec: inequality and definition}
For any pure state $\ket{\psi}\in\mathcal H^{\otimes n}$ we can verify that
\begin{equation}\label{eq: general bound collective global}
    \Delta(\hat H_\text{coll})\leq n^2 \max_i  \Delta(\hat H_i),
\end{equation}
For clarity reasons, when no confusion is possible, the dependence on the quantum state is kept implicit. This inequality establishes a link between the variance of a quantum state in various directions. Notably, the collective variance is bounded by the local ones. As this expression is central to this study, its meaning and implications will be discussed in more detail in later parts of this paper. We first proceed with its proof. Although a simple proof can be made using the Cauchy-Schwarz inequality\footnote{$\Delta(\hat H_\text{coll})=\sum_{i,j=1}^n\operatorname{Cov}(\hat H_i,\hat H_j)\leq \sum_{i,j=1}^n\sqrt{\Delta(\hat H_i)\Delta(\hat H_j)}\leq n^2\max_i\Delta(\hat H_i)$} on the bilinear form of quantum covariance, we provide an alternate proof as it uses a formalism that will be used later.\\

Let's introduce an orthogonal matrix $A=(\alpha_{ij})\in\mathcal O_n(\R)$ for which we have $\alpha_{1j}=\frac{1}{\sqrt{n}}$ for all index $j$. As $A$ is orthogonal, we have the following relations coming from $AA^T=A^TA=I_n$ (the $n\times n$ identity matrix)
\begin{align}
    \sum_{i=1}^n \alpha_{ij}\alpha_{ik}=\delta_{jk}, && \sum_{i=1}^n\alpha_{ji}\alpha_{ki}=\delta_{jk}.
\end{align}
With the matrix $A$, we introduce new collective operators
\begin{equation}\label{eq: definition P operators}
    \hat P_i=\sum_{j=1}^n \alpha_{ij}\hat H_j.
\end{equation}
Given the condition on the first line of $A$, we have that
\begin{equation}
    \hat P_1=\sum_{j=1}^n \alpha_{1j} \hat H_j=\frac{1}{\sqrt{n}}\hat H_\text{coll}.
\end{equation}
Finally, with the inversion property of $A$, we can express the $\hat H_j$'s operators in terms of the $\hat P_i$'s
\begin{equation}
    \sum_{i=1}^n \alpha_{ij}\hat P_i=\sum_{i,k=1}^n\alpha_{ij}\alpha_{ik}\hat H_k=\sum_{k=1}^n \delta_{jk}\hat H_k=\hat H_j.
\end{equation}
With all these definitions/expressions in place, we can compute
\begin{subequations}
    \begin{align}
        \max_i\Delta(\hat H_i)&\geq \frac{1}{n}\sum_{i=1}^n \Delta(\hat H_i)\\
        &=\frac{1}{n}\sum_{i,j,k=1}^n\alpha_{ji}\alpha_{ki}\operatorname{Cov}(\hat P_j,\hat P_k)\\
        &=\frac{1}{n}\sum_{j,k=1}^n\delta_{jk}\operatorname{Cov}(\hat P_j,\hat P_k)\\
        &=\frac{1}{n}\sum_{j=1}^n\Delta(\hat P_j)\\
        &\geq \frac{1}{n}\Delta(\hat P_1)=\frac{1}{n^2}\Delta(\hat H_\text{coll}),\label{eq: last step proof inequ}
    \end{align}
\end{subequations}
where, $ \operatorname{Cov}_{\ket{\psi}}(\hat A,\hat B)= \bra{\psi}\hat A\hat B\ket{\psi}-\bra{\psi}\hat A\ket{\psi}\bra{\psi}\hat B\ket{\psi}$, is the quantum covariance for commuting operators. The first inequality is obtained by saying that the maximum is larger than the mean, and the last one is because the variance of $\hat P_j$ ($j>1$) is non-negative. By multiplying by $n^2$ we obtain the desired inequality. Applied to the time-frequency setting, we obtain a stronger version of the inequality from Ref.~\cite{descamps_quantum_2023}. In the case of finite-dimensional Hilbert spaces, one can simplify the inequality~(\ref{eq: general bound collective global}). Indeed, the right-hand side can be bounded from above by a state-independent term. By denoting $\lambda_{\max}$ (resp. $\lambda_{\min}$) the largest (resp. smallest) eigenvalues of $\hat H$, one can verify that for any state $\ket{\psi}\in\mathcal H^{\otimes n}$ \cite{pezze_quantum_2014} (refer to appendix~\ref{appendix popoviciu's} for a proof provided for completeness), 
\begin{equation}
    \Delta_{\ket{\psi}}(\hat H_i)\leq \frac{(\lambda_{\max}-\lambda_{\min})^2}{4},
\end{equation}
with this, we recover equality already formulated within this general context in Ref.~\cite{pezze_quantum_2014}

\begin{equation}\label{eq: general bound in finite dim}
    \Delta(\hat H_\text{coll})\leq \frac{n^2(\lambda_{\max}-\lambda_{\min})^2}{4}.
\end{equation}
Applying this for the case of spin observables, where $\lambda_{\max}=-\lambda_{\min}=1$ recovers the usual sum of spin inequality $\Delta(\hat Z_1+\cdots +\hat Z_n)\leq n^2$.\\

Taking the quotient of the right and left sides of the inequality~(\ref{eq: general bound collective global}), it is natural to define the following quantity
\begin{equation}\label{eq: def quantifier}
    F_{\ket{\psi}}=\frac{\Delta(\hat H_\text{coll})}{\max_i  \Delta(\hat H_i)},
\end{equation}
provided that the denominator does not vanish. If the denominator vanishes, Eq.~(\ref{eq: general bound collective global})  implies that the numerator must also vanish, resulting in an indeterminate form. By convention, we define $F=0$ for such states. Using the additivity of the quantum variance for pure product states \cite{pezze_quantum_2014}, we verify that $F$ is sub-additive on product pure states. If $\ket{\psi}\in\mathcal H^{\otimes l}$ and $\ket{\phi}\in\mathcal H^{\otimes m}$, we have
\begin{subequations}\label{eq: subadditivity}
    \begin{align}
        F_{\ket{\psi}\otimes\ket{\phi}}&=\frac{\Delta_{\ket{\psi}\otimes\ket{\phi}} \hat H_\text{coll}}{\max_i\Delta_{\ket{\psi}\otimes\ket{\phi}} \hat H_i}\\
        &=\frac{\Delta_{\ket{\psi}}(\hat H_1+\cdots+\hat H_l)+\Delta_{\ket{\phi}}(\hat H_1+\cdots+\hat H_m)}{\max\{\max_i \Delta_{\ket{\psi}}\hat H_i,\max_j \Delta_{\ket{\phi}}\hat H_j\}}\\
        &\leq \frac{\Delta_{\ket{\psi}}(\hat H_1+\cdots+\hat H_l)}{\max_i \Delta_{\ket{\psi}}\hat H_i}+\frac{\Delta_{\ket{\phi}}(\hat H_1+\cdots+\hat H_m)}{\max_j \Delta_{\ket{\phi}}\hat H_j}\\
        &=F_{\ket{\psi}}+F_{\ket{\phi}}
    \end{align}    
\end{subequations}

Recursively applying this to a product state $\ket{\psi}=\ket{\psi_1}\otimes\cdots\otimes\ket{\psi_n}$ and noting that for each single-system state $\ket{\psi_i}\in\mathcal H$, $F_{\ket{\psi_i}}=1$, it follows that for any product state, $F_{\ket{\psi}}\leq n$. As such, obtaining $F_{\ket{\psi}}> n$ is a witness of entanglement. Entanglement witnesses are typically defined as functions $\mathcal{F}(\hat{\rho})$ that satisfy $\mathcal{F}(\hat{\rho}) \geq 0$ for all separable states and $\mathcal{F}(\hat{\rho}) < 0$ for some entangled states. In our case, $\mathcal{F}(\hat{\rho}) = n - F$ fits to this characterization. It is worth noting that the simplest form of an entanglement witness is linear, taking the form $\mathcal{F}(\hat{\rho}) = \Tr(\hat{W}\hat{\rho})$, where $\hat{W}$ is a suitably chosen observable. However, our witness does not belong to the category of linear witnesses. Non-linear witnesses form a richer class of witnesses that has already been explored, for example, in \cite{hyllus_optimal_2006,guhne_nonlinear_2006}. Entanglement witnesses can be merely mathematical constructs or be related to some physical property of the system. In our work, it is indeed possible to assign a physical interpretation to $F$. Specifically, in Sec.~\ref{subsec: motivation}, we explore its connection to metrology, and throughout the paper, we analyze its graphical interpretation. Intuitively, it quantifies how much the state is entangled in the collective variable $H_\text{coll}$. In the following, we aim to make this intuition more robust.

\subsection{Motivation}\label{subsec: motivation}
Now that we have covered all the foundational aspects of our work, we turn to the discussion of the motivations behind our definitions and the physical implications of the bound in equation (\ref{eq: general bound collective global}), as well as the states that saturate them. First, it is important to recognize that our work, like many previous studies, explores the connections between entanglement and metrology. In its simplest form, where $n$ systems are used to estimate a parameter $\theta$ encoded in each system via the evolution $e^{i\theta\hat H_i}$, the ultimate precision limit for pure states is directly related to the quantum variance of the operator $\hat H_\text{coll}=\hat H_1+\cdots+\hat H_n$, or the quantum Fisher information when dealing with mixed states. For independent probes, the precision scales linearly with $n$, a behavior known as {\it shot noise}. This is analogous to our statement that $F\leq n$ for product states. The key insight of quantum metrology is that entanglement between different parties can enable higher precision. In such cases, the precision can scale up to $n^2$, a phenomenon referred to as {\it Heisenberg scaling}. Therefore, equation (\ref{eq: general bound collective global}) can be interpreted as a generalized version of the Heisenberg bound, valid for an extensive class of systems, which asserts that optimal precision always scales quadratically with the number of probes.\\

The use of the quantum Fisher information for detecting and measuring entanglement arises from the following observation: if entanglement is necessary for achieving a high value of the quantum Fisher information, then, conversely, a large value of the quantum Fisher information serves as an entanglement witness. This idea has been explored in works such as \cite{hyllus_fisher_2012,chen_wigner-yanase_2005, pezze_quantum_2014}. In this paper, we further develop this concept. We recognize that, given an initial state $\ket{\psi}$, one can generally increase the global variance $\Delta_{\ket{\psi}}\hat H_\text{coll}$ by modifying $\ket{\psi}$. The first approach is to ``stretch" the state by increasing the local variance $\Delta\hat H_i$ without altering the amount of entanglement. The second approach is to increase entanglement without changing the local variances $\Delta\hat H_i$. Both of these possibilities are illustrated in Fig.~\ref{fig: increasing_precision}. Although such enlargements are not always feasible, depending on the initial state and the Hilbert space $\mathcal H$, they demonstrate that the global variance $\Delta\hat H_\text{coll}$ is not exclusively a quantifier of entanglement, as it can be increased without introducing additional entanglement. This is why we define $F$ as the ratio of the global variance $\Delta\hat H_\text{coll}$ to the maximum of the local variances $\max_i \Delta\hat H_i$, ensuring that the stretching of the state (which does not impact entanglement) does not affect the value of $F$. The usefulness of comparing local variances and the quantum Fisher information of non-local operators was already highlighted in \cite{gessner_efficient_2016}, where a general method to construct entanglement criteria based on local observables was introduced, and further developed in \cite{lopetegui_detection_2024} to design experimentally accessible witnesses capable of detecting strong forms of entanglement in non-Gaussian continuous-variable states. 

\begin{figure}
    \centering
    \scalebox{0.7}{\tikzfig{increasing_precision}}
    \caption{Two ways to increase the collective variance $\Delta\hat H_\text{coll}$ of a pure state. The left plot shows the spectral representation of an initial product state $\ket{\psi}$. The upper-right plot depicts the same state after it has been stretched. Although the state remains a product state, it now exhibits a larger $\Delta\hat H_\text{coll}$, at the cost of similarly increasing the local variance $\Delta\hat H_i$. The lower-right plot illustrates a state of similar size to the original, but with certain points removed, resulting in an entangled state with an unchanged local variance $\Delta\hat H_i$. In both cases, the collective variance $\Delta\hat H_\text{coll}$ is increased to the same extent.}
    \label{fig: increasing_precision}
\end{figure}

\subsection{Remark on the definition of collective operators}\label{subsec: remark def coll op}
Previously, we defined the collective operators $\hat H_\text{coll}$ and $\hat P_i$. Here, we briefly discuss the choices made in these definitions. Specifically, we defined $\hat H_\text{coll}=\hat H_1+\cdots+\hat H_n$ as the sum of all local operators $\hat H_i$, with the prefactors set to 1. More generally, these prefactors could be chosen as either 1 or $-1$.
\begin{equation}
    \hat H_\text{coll}=\sum_{i=1}^n c_i\hat H_i,
\end{equation}
where $c_i=\pm1$. All results presented so far, as well as those discussed later in this paper, can be readily adapted to this more general definition of the collective operator $\hat H_\text{coll}$.  However, for clarity and simplicity of exposition, we will restrict our focus to the case $c_i=1$ outside of this section.\\

The specific definition of $\hat H_\text{coll}$ depends on the system under study. For instance, the original definition ($c_i=1,\forall i$) is particularly relevant in metrological contexts, as demonstrated in \cite{pezze_quantum_2014}. The findings in \cite{pezze_quantum_2014}, together with inequality (\ref{eq: general bound collective global}), provide valuable insights into the achievable estimation precision and the characteristics of optimal states. Further discussion regarding the optimal states is presented in Sec.~\ref{subsec: equality case}.\\

The definition of $\hat{H}_\text{coll}$ involving $-1$ prefactors can also be relevant in certain contexts. One notable example is provided by time-frequency states for the case of two single photons ($n=2$). Such states can be generated through the SPDC process \cite{boucher_toolbox_2015} and, under specific conditions, exhibit a spectrum with strong anti-correlation in frequency. These states yield a high value of $F$ when considering $\hat{H}_\text{coll} = \hat{\omega}_- = \hat{\omega}_1 - \hat{\omega}_2$, while the corresponding value of $F$ for $\hat{\omega}_+ = \hat{\omega}_1 + \hat{\omega}_2$ remains close to zero. Furthermore, as demonstrated in \cite{descamps_time-frequency_2023}, the Hong-Ou-Mandel interferometer is particularly well-suited for measuring two-photon states, specifically providing information related to the operator $\hat{\omega}_-$ rather than $\hat{\omega}_+$. In this context, analyzing $F$ as defined for the collective operator $\hat{\omega}_-$ offers significant insight into the extent to which two-photon states are optimized for Hong-Ou-Mandel measurements.\\

In the proof of Eq.~(\ref{eq: general bound collective global}), we introduce a set of operators $\hat P_i$, defined in Eq.~(\ref{eq: definition P operators}), constructed using collective variables that are orthogonal to the one defining $\hat H_\text{coll}$. It is important to note that the $\hat P_i$ operators are themselves collective, as they act simultaneously on multiple systems. However, two key distinctions exist between the definitions of $\hat H_\text{coll}$ and $\hat P_i$. Firstly, the normalization of $\hat H_\text{coll}$ ensures that the sum of the squared coefficients satisfies $\sum_{i=1}^n c_i^2=n$. In contrast, the coefficients of $\hat P_i$ are normalized such that $\sum_{j=1}^n \alpha_{ij}^2=1$. To address this discrepancy, we have defined a normalized version of $\hat H_\text{coll}$, denoted as $\hat P_1=\hat H_\text{coll}/\sqrt{n}$. Additionally, in the definition of $\hat H_\text{coll}$, the coefficients $c_i$ are restricted to $\pm1$, while $\hat P_i$ permits arbitrary coefficients $\alpha_{i,j}$. The constraint $c_i=\pm1$ in $\hat H_\text{coll}$ serves two purposes. First, it ensures symmetry among all modes ($\abs{c_i}=cst$), and as shown in the appendix \ref{appendix: cauchy-schwarz general} it is the only case where states saturating the inequality in (\ref{eq: general bound collective global}) can exist.

\subsection{Equality case}\label{subsec: equality case}
In the previous section, we have argued that $F_{\ket{\psi}}$ is a quantifier for a specific entanglement related to a metrological context. A first step to make this intuition more robust is to understand for which states $\ket{\psi}$, $F_{\ket{\psi}}$ is maximal, {\it i.e.}, the states that saturate the inequality~(\ref{eq: general bound collective global}). We first discuss the general conditions such states must satisfy, then discuss the geometrical consequences in terms of the spectral space before analyzing how this works in the two examples. 
\subsubsection{General argument}\label{subsubsec: general argument}
Looking back at the previous proof, it is easy to see under which conditions the inequality Eq.~(\ref{eq: general bound collective global}) is saturated. Indeed, this bound is obtained by a chain of equalities and inequalities. As only two inequalities were used, it follows that Eq.~(\ref{eq: general bound collective global}) is saturated if and only if
\begin{equation}
    \left\{\begin{aligned}
        &\forall i,j,\;\Delta(\hat H_i)=\Delta(\hat H_j),  \\
          &\forall j\geq 2,\;\Delta(\hat P_j)=0.
    \end{aligned}\right.
\end{equation}
However, it is easy to see that the second condition automatically implies the first one. Indeed, the Cauchy-Schwarz inequality for the covariance implies that 
\begin{equation}
    \abs{\operatorname{Cov}(\hat P_i,\hat P_j)}^2\leq \Delta(\hat P_i)\Delta(\hat P_j)=0,
\end{equation}
if $i\geq 2$ or $j\geq 2$. Thus when expanding $\Delta(\hat H_i)$, in terms of the covariance of the $\hat P_j$'s operators, only one term is non-zero, and we recover $\Delta(\hat H_i)=\frac{1}{n^2}\Delta(\hat H_\text{coll})$.\\

First, a trivial class of states that saturate the bound are zero variance states, for which both the left- and right-hand sides of~(\ref{eq: general bound collective global}) are zero. They are the product of eigenvectors of $\hat H$:
\begin{equation}
    \ket{\psi}=\ket{\psi_{\lambda_1}}\otimes\cdots\otimes \ket{\psi_{\lambda_n}}.
\end{equation}
However, such states are product states for which we have, by convention, set the quantifier $F$ defined in equation (\ref{eq: def quantifier}) to zero. We will thus avoid such states in the following. Another class of states that achieve the equality in Eq.~(\ref{eq: general bound collective global}) is given by 
\begin{equation}\label{eq: first diagonal states}
    \ket{\psi}=\int\limits_{\lambda\in\Lambda} S(\lambda)\ket{\psi_\lambda}^{\otimes n} d\lambda,
\end{equation}
for an arbitrary function $S:\Lambda\to\C$. In the case of finite-dimensional $\mathcal H$, this corresponds to the sum $\sum_{\lambda\in\Lambda} C_\lambda\ket{\psi_\lambda}^{\otimes n}$ where the $C_\lambda$ are arbitrary constants. However, depending on the specific geometry of $\Lambda$, other states may exist. 

\subsubsection{Geometrical consequence}\label{subsubsec: geometrical consequence}
We consider a state $\ket{\psi}\in\mathcal H^{\otimes n}$ and denote by $S:\Lambda^n\to\C$ its spectral amplitude. As already stated, the state $\ket{\psi}$ saturates the inequality~(\ref{eq: general bound collective global}) if and only if $\Delta \hat P_j=0$ for all $j\geq 2$.  Given the geometric intuition provided by the spectral space introduced in Section~\ref{subsec: spectral space}, it is clear that the distribution $S$ has zero thickness in the direction given by the variables $P_2,\dots,P_n$, as the variance $\Delta P_j$ (for $j\geq 2$) are zero. The remaining orthogonal direction is given by the variable $P_1$ which is directed by the vector $\va{u}=(1,\dots,1)$. As such, the support of $S$ needs to lie on a line ({\it i.e.}, a one-dimensional affine subspace of $\R^n$) directed by $\va{u}$. We can note that this line does not need to pass through the origin. Depending on the spectrum $\Lambda$, there may be more or less such lines inside the spectral space $\Lambda^n$. Different lines will yield different states saturating Eq.~(\ref{eq: general bound collective global}). This is what we explore with our two examples in the next sections.

By observing that the spectral representation of the optimal state is elongated along the main diagonal, we gain a geometric insight into the type of entanglement quantified by $F$: it quantifies the extent to which the spectral representation is stretched in the direction of the variable $P_1$. Mathematically, this elongation corresponds to positive correlations among all the random variables $H_i$.

\subsubsection{For finite dimension}\label{subsubsec: finite dimension}
In the finite-dimensional case, we can study the saturation of the original inequality~(\ref{eq: general bound collective global}) or the one obtained after simplification~(\ref{eq: general bound in finite dim}).\\

\paragraph{For the general bound}\label{parag: finite dim equal for general bound}~\\
In the previous sections, we gave an example of a state saturating the bound~(\ref{eq: general bound collective global}) and gave a geometrical description of all such states.  As it is fundamentally linked to the geometry of the spectrum $\Lambda$, we now analyze which finite spectrum leads to such states. As $\Lambda$ is finite, we can enumerate its elements, which we assume are ordered $\Lambda=\{\lambda_1<\cdots<\lambda_d\}$. As such, the spectral space $\Lambda^n$ is an uneven rectangular grid of $d^n$ points.\\

As expressed in the previous section, the support of $S$ is on a line of direction vector $\va{u}=(1,\dots,1)$. In Fig.~\ref{fig:evenly spaced grid} we plot the $2$-dimensional spectral space associated with $\hat H$ with evenly spaced eigenvalues. In this simple situation, any diagonal with directing vector $\va{u}$ passing by at least two points will correspond to nontrivial states saturating the bound. However, in general, when the eigenvalues of $\hat H$ are not equally spaced, there are fewer possibilities, as shown in Fig.~\ref{fig:not evenly spaced grid}. To have more possibilities, than just the one corresponding to states of Eq.~(\ref{eq: first diagonal states}), at least three eigenvalues of $\hat H$ should be equally spaced ({\it i.e.} $\abs{\lambda_a-\lambda_b}=\abs{\lambda_c-\lambda_b}$).\\

\begin{figure}
    \centering
    \scalebox{0.6}{\tikzfig{eigenvalue_regular_grid}}
    \caption{Spectral space of $\mathcal H^{\otimes 2}$ in the case of evenly spaced eigenvalues. In red, green, and blue are given examples of support for states saturating the inequality. Crosses represent the spectral space (all the possible couple $(\lambda_1,\lambda_2)$ of eigenvalues of $\hat H$).}
    \label{fig:evenly spaced grid}
\end{figure}

\begin{figure}
    \centering
    \scalebox{0.6}{\tikzfig{eigenvalue_non_regular_grid}}
    \caption{Spectral space of $\mathcal H^{\otimes 2}$ in the case of unevenly spaced eigenvalues. The black lines correspond to the line directed by $\va{u}$ but passing through only one point of $\Lambda$. For these states $\Delta \hat H_i=\Delta \hat H_\text{coll}=0$, and we have by convention defined $F=0$, thus not leading to acceptable optimal states. The red dashed lines are also directed by the vector $\va{u}$ but are passing through at least two points of $\Lambda$.}
    \label{fig:not evenly spaced grid}
\end{figure}

\paragraph{For the simplified bound}\label{parag: equality for simplified bound}~\\
If we still denote by $h_{\max}$ and $h_{\min}$ the extremal eigenvalues of $\hat H$, the states saturating the bound~(\ref{eq: general bound in finite dim}) are given by
\begin{equation}\label{eq: saturate simplified eq}
    \ket{\psi}=\frac{1}{\sqrt{2}}\left[\ket{\psi_{\min}}^{\otimes n}+e^{i\varphi}\ket{\psi_{\max}}^{\otimes n}\right],
\end{equation}
where $\ket{\psi_{\min}}$ and $\ket{\psi_{\max}}$ are the eigenvectors of $\hat H$ corresponding the the minimal and maximal eigenvalues and the phase $e^{i\varphi}$ can be arbitrary. Details are provided in appendix~\ref{appendix popoviciu's}. Up to the additional phase $e^{i\varphi}$, states in Eq.~(\ref{eq: saturate simplified eq}) correspond to Greenberger-Horne-Zeilinger-like states (GHZ) \cite{greenberger_going_1989}, which were previously discussed in \cite{pezze_quantum_2014}. In that work, it was shown that these states saturate the bound in Eq.~(\ref{eq: general bound in finite dim}). In this paper, we provide a complete characterization of all such optimal states. GHZ states, as well as their extension to qudits\footnote{For a $d$ dimensional Hilbert space with basis $\ket{1},\dots\ket{d}$ GHZ-like states are generally defined as $\ket{\text{GHZ}_d}=\sum_{i=1}^d \ket{i\cdots i}/\sqrt{d}$. They correspond to a subclass of the discrete version of the states in Eq.~(\ref{eq: first diagonal states}) with constant coefficients $C_\lambda=1/\sqrt{d}$ for which we have shown that they maximize $F$.}, have numerous applications in quantum metrology \cite{toth_quantum_2014}, quantum communication \cite{durt_security_2004,cerf_security_2002}, and quantum computing \cite{raussendorf_one-way_2001}. Our proposed quantifier could be used to assess the quality of the states produced, and imperfections in the implementation of related protocols could be quantified in terms of it. We leave these investigations for future work. \\

It is important to notice that the state of Eq.~(\ref{eq: saturate simplified eq}) also saturates the original inequality~(\ref{eq: general bound collective global}). However, as already pointed out, a much richer class of states saturates Eq.~(\ref{eq: general bound collective global}).

\subsubsection{Time frequency case}\label{subsubsec: for time-frequency}
Time-frequency states saturating equality~(\ref{eq: general bound collective global}) have already been analyzed in \cite{descamps_quantum_2023}. However, in the following, we want to discuss them in light of the more general picture we provide here. As discussed in Section~\ref{subsubsec: geometrical consequence}, the spectrum of the local operator plays a huge role in the determination of the states that saturate this bound. Thus, it should be noted that the spectrum of $\hat\omega$ is the whole real line $\Lambda=\R$.\\

As the analytical derivation has already been done in \cite{descamps_quantum_2023}, in the present paper we employ only the geometrical picture. As stated previously, states saturating the bound have to satisfy $\Delta(\hat P_j)=0$ for $j\geq 2$, which means that when plotted as functions on $\R^n$, the spectrum $S(\omega_1,\dots,\omega_n)$ should have its support lying on a line, directed by the vector $\va{u}=(1,\dots,1)$. Although in the case of finite dimension, there was only a discrete set of points on which $S$ could lie, which restricted the possibilities, in the present situation, the entire volume of $\R^n$ is accessible. This is pictured in two dimensions in Fig.~\ref{fig:saturation continuous}. A consequence of this is that all functions $S$ with concentrated support along the collective direction $\va{u}$ define a state that saturates the bound. From these observations, we arrive at the general formula describing such states
\begin{equation}
    \ket{\psi}=\int \dd\Omega S(\Omega)\ket{\Omega+\omega_1^0,\dots,\Omega+\omega_n^0}.
\end{equation}
The function $S$ encodes the way in which the spectrum of $\ket{\psi}$ is distributed (amplitude and phase) along the line, while the constants $\omega_i^0$ control the global shift of this line in $\R^n$. It is important to recognize that such a state is not physical, as it requires perfect correlation between the frequencies of the different photons. This is mathematically reflected by the fact that integration is performed only on one variable and thus they are Dirac delta distributions hiding in the spectrum $S$ of $\ket{\psi}$. These states, therefore, are not normalizable and, strictly speaking, do not belong to the Hilbert space (for example, like the ``generalized eigenstate" of the position or momentum operators). However, as discussed in our previous work, these states play a significant role in metrology \cite{descamps_quantum_2023,descamps_time-frequency_2023} and may also have potential applications in error correction \cite{descamps_gottesman-kitaev-preskill_2024}.
\begin{figure}
    \centering
    \scalebox{0.6}{\tikzfig{saturation_continuous}}
    \caption{Spectral space of $\mathcal S_2$. The colored lines represent possible support of states saturating the inequality~(\ref{eq: general bound collective global}). The blue lines correspond to the main diagonal. The red ones indicate other possibilities. The dotted lines emphasize the fact that the support of the spectrum does not have to be connected.}
    \label{fig:saturation continuous}
\end{figure}

\subsection{Adding thickness}\label{subsec: adding thickness}
For experimental reasons, the states saturating the bound~(\ref{eq: general bound collective global}) may not be accessible. As seen before, it can be because such states are not physical: infinite multi-mode squeezing for the quadrature \cite{fabre_modes_2020} or infinitely thin spectrum for time-frequency states. These states are non-normalizable and, as a result, cannot be experimentally produced, as they do not belong to the Hilbert space. It can also be due to the experimental limitation with the inability to generate states that are perfectly entangled along the considered variables. Thus, we may want to understand how the bound Eq.~(\ref{eq: general bound collective global}) is modified in these situations. We mathematically model this through the following inequalities 
\begin{equation}
    \forall j\geq 2,\;\Delta  \hat P_j\geq \zeta \Delta \hat P_1,
\end{equation}
where the operators $\hat P_j$ were introduced in the proof of Eq.~(\ref{eq: general bound collective global}). The positive real parameter $\zeta$ controls how thin the states can be. Although these operators were initially introduced as mathematical tools for the derivation, the choice of definition for $\hat P_j$ is significant in this context. Specifically, the condition $\Delta  \hat P_j\geq \zeta \Delta \hat P_1$ is not invariant under a change of the operator $\hat P_j$. The spectral representation provides an intuitive understanding of the non-zero thickness condition. Fig.~\ref{fig: non-zero thickness example} shows the spectral representation of such states for the finite-dimensional case and time-frequency systems when $n=2$. While we allow the states to be predominantly elongated in the direction of the main collective variable, the representation must retain some thickness in the orthogonal directions. Adapting the last step of the former proof (see Eq.~\ref{eq: last step proof inequ}), one gets that
\begin{equation}
    \max_i\Delta(\hat H_i)\geq \frac{1}{n}\Big(1+\zeta (n-1)\Big)\Delta(\hat P_1),
\end{equation}
which can be rearranged as
\begin{equation}
    F_{\ket{\psi}}\leq \frac{n^2}{(1-\zeta )+\zeta n}.
\end{equation}
A first analysis of this bound can be made by exploring different values of $\zeta$. For $\zeta=0$ (the constraint becomes trivial), we indeed recover Eq.~(\ref{eq: general bound collective global}). For $\zeta=1$, we get $F\leq n$. In this case, the state $\ket{\psi}$ has to be equally distributed in all directions. The last extreme value is $\zeta=\infty$, for which $F=0$. In this case, the state is infinitely thin toward the collective operator $\hat H_\text{coll}$.

\begin{figure*}
    \centering
    \scalebox{1.1}{\tikzfig{Finite_width_example}}
    \caption{Two examples of states with non-zero width in the secondary variables in two dimensions. On the left: $\mathcal H$ is finite-dimensional. On the left panel, crosses represent the spectral space (all the possible couples $(\lambda_1,\lambda_2)$ of eigenvalues of $\hat H$), and the red filled circle represents the support of a particular state. On the right: $\mathcal H$ is the Hilbert space of two time-frequency single photon states. In both plots, the orthogonal arrows represent the geometric size of the state in spectral space, which is related to the variance of the corresponding operators.}
    \label{fig: non-zero thickness example}
\end{figure*}

\subsection{Extension to mixed states}\label{subsect: extension to mixed states}
The bound of Eq.~(\ref{eq: general bound collective global}) and the quantity $F$ can be extended to mixed states in multiple ways. As the only ingredient used in the derivation of Eq.~(\ref{eq: general bound collective global}) is the bilinearity and the positivity of the covariance, any construction based on this type of operation will immediately work. This is the case for the quantum variance naturally extended to the mixed state $\hat\rho$ as $\Delta_{\hat \rho}(\hat A)=\Tr(\hat A^2\hat \rho)-\Tr(\hat A\hat \rho)^2$ and the quantum Fisher information $\mathcal Q_{\hat\rho}(\hat A)$. Using the inequality $4\Delta_{\hat \rho}(\hat A)\geq  \mathcal Q_{\hat \rho}(\hat A)$ \cite{toth_extremal_2013,yu_quantum_2013}, we get
\begin{equation}
     \frac{\mathcal Q_{\hat \rho}(\hat H_\text{coll})}{4\max\limits_i \Delta_{\hat \rho}(\hat H_i)}\leq \frac{\mathcal Q_{\hat \rho}(\hat H_\text{coll})}{\max\limits_i \mathcal Q_{\hat \rho}(\hat H_i)}\leq n^2.
\end{equation}
As well as
\begin{equation}
    \frac{ \Delta_{\hat \rho}(\hat H_\text{coll})}{\max\limits_i \Delta_{\hat \rho}(\hat H_i)}\leq n^2.
\end{equation}
In the case of non-zero thickness, using the hypothesis that $\mathcal Q(\hat P_j)\geq \zeta  \mathcal Q(\hat P_1)$ (for $j\geq 2$) or $ \Delta(\hat P_j)\geq \zeta  \Delta(\hat P_1)$ (for $j\geq 2$) allows to modify the right-hand side in a similar manner $n^2\mapsto  \frac{n^2}{(1-\zeta )+\zeta n}$.\\

The steps we follow to extend $F$ to mixed states are similar to those outlined in \cite{chen_wigner-yanase_2005}. As explained in the Discussion Section~\ref{subsec: motivation}, their work begins by observing that for pure states of qubit systems, the variance of the collective spin observable $\hat{H}_\text{coll} = \hat{Z}_1 + \cdots + \hat{Z}_n$ satisfies the inequality $\Delta_{\ket{\psi}} \hat{H}_\text{coll} \leq n^2$. The extension to product states is then achieved by seeking a convex generalization of the variance to mixed states. One such extension is the quantum Fisher information, which is proportional to the convex roof of the variance \cite{yu_quantum_2013}. We adopt this approach in the following. An alternative is the Wigner-Yanase skew information \cite{wigner_information_1963}, which is a convex function that reduces to the variance for pure states. Although this choice is less optimal for detecting entanglement, since it is smaller than the quantum Fisher information, it has the advantage of being computationally simpler.\\

In the following, we would like to obtain a formula that is convex in the quantum states. It is well known that the quantum Fisher information is convex \cite{yu_quantum_2013}. However, due to the presence of the denominator, it is not clear that any of the previously proposed expressions are convex. To remedy this, we propose three solutions.

\subsubsection{Convex roof approach}\label{subsubsec: convex roof}
The first approach consists of forcing the convexity, more specifically, by considering the largest convex extension. It can be theoretically described by the convex roof construction \cite{toth_extremal_2013,yu_quantum_2013,uhlmann_roofs_2010}. More specifically, it is defined by the following formula
\begin{align}\label{eq: convex roof def of F}
    F^\text{CR}_{\hat \rho}&=\inf_{\{p_i,\ket{\psi_i}\}} \sum_i p_i F_{\ket{\psi_i}}\notag\\
    &=\inf_{\{p_i,\ket{\psi_i}\}} \sum_i p_i\frac{\Delta_{\ket{\psi_i}}(\hat H_\text{coll})}{\max_j\Delta_{\ket{\psi_i}}(\hat H_j)},
\end{align}
where the infimum is taken over all decompositions of the mixed state $\hat\rho=\sum p_i\ketbra{\psi_i}$ ($p_i>0$). Since the state $\ket{\psi_i}$ appearing in the decomposition of $\ket{\psi}$ may satisfy $\max_j\Delta_{\ket{\psi_i}}(\hat H_j)=0$, it is important to adopt the convention that $F=0$ for such state. We can verify that this quantity
\begin{itemize}
    \item is less than $n^2$,
    \item reduces to the previous definition for pure states (quotient of the global variance by the maximum of local ones),
    \item is a convex function of the states, and 
    \item is the largest function satisfying the two above properties.
\end{itemize}
The first property follows directly from the fact that it holds for pure states, and taking a convex combination cannot increase its value. The remaining three properties are derived from general principles of the convex roof construction (see Appendix~\ref{appendix: convex roof} for detailed explanations and proofs). This construction, commonly referred to as the convex roof construction, is a powerful method for extending functions defined on pure states to mixed states in a convex manner. It holds significant theoretical importance, as it yields the largest convex extension of the original function. However, despite its elegance, the abstract nature of this definition makes the quantity difficult to interpret intuitively. More importantly, from a practical standpoint, it is generally unclear whether this quantity can be efficiently computed.

\subsubsection{Utilizing the support of $\hat \rho$}\label{subsubsec: support}
As discussed before the difficulty of obtaining a quantity that is convex in the mixed state $\hat \rho$, is the presence of the denominator. The following proposition uses the convexity of the quantum Fisher information without breaking the convexity of the quotient by a clever choice of denominator.
\begin{equation}\label{eq: def F supp}
    F^\text{S}_{\hat \rho}=\frac{\mathcal Q_{\hat \rho}(\hat H_\text{coll})}{4\sup\limits_{\ket{\psi}\in\operatorname{Supp}(\hat\rho)} \max_i \Delta_{\ket{\psi}}(\hat H_i)}.
\end{equation}
Where $\operatorname{Supp}(\hat \rho)$ is the support of $\hat \rho$. The factor $4$ comes from the proportionality between the Quantum Fisher information and the variance for pure states: $\mathcal Q_{\ket{\psi}}(\hat H_\text{coll})=4\Delta_{\ket{\psi}}(\hat H_\text{coll})$. There are two equivalent definitions for the support of $\hat \rho$ in the case $\hat \rho$ has finite rank (details in appendix \ref{appendix: def support}).
\begin{itemize}
    \item $\ket{\psi}\in(\ker\hat\rho)^\perp$
    \item $\ket{\psi}$ appears in a convex decomposition of $\hat\rho$: $\hat\rho=\alpha\ketbra{\psi}+\sum_i\alpha_i\ketbra{\psi_i}$, for some $\alpha,\alpha_i>0$ and some pure states $\ket{\psi_i}$.
\end{itemize}

We can verify that $F^\text{S}_{\hat\rho}$ is indeed convex in the mixed state $\hat\rho$ (details are provided in appendix~\ref{appendix: formula support conv}). For finite rank mixed state, this definition is very useful since it is general, and more computable than the one of Eq.~(\ref{eq: convex roof def of F}). Indeed, in this case, the support of $\hat\rho$ is finite-dimensional, and thus the optimization is done over a compact set (the corresponding sphere of unit norm states) which ensures that the denominator is finite. However, if the rank is infinite, it is not clear whether $\sup\limits_{\ket{\psi}\in\operatorname{Supp}(\hat\rho)}$ is finite or not. In the case it is infinite, the quantity $F^\text{S}_{\hat\rho}$ is zero and loses all usefulness.

\subsubsection{Bounding the denominator}\label{subsubsec: bounding the denominator}
The last proposition is less general and is based on a similar idea to the last one. We use the numerator with the quantum Fisher information to get the convexity and try to bound the denominator, such that it does not have an impact on the convexity. If we restrict ourselves to mixed states $\hat \rho$, with bounded variance
\begin{equation}
    \max_i\Delta_{\hat \rho}(\hat H_i)\leq A,
\end{equation}
for a constant $A$ independent of the state $\hat \rho$, or such a class of states, we can define 
\begin{equation}
    F^\text{R}_{\hat\rho}=\frac{\mathcal Q_{\hat\rho}(\hat H_\text{coll})}{4A},
\end{equation}
which is convex, as it is proportional to the quantum Fisher information. This formula has the advantage of being simpler than the previous one, and thus more computable. However, we potentially need to restrict ourselves to a very small class of states. Furthermore, as the denominator is replaced by another quantity, the corresponding quantifier contains less information about the state. This construction is particularly relevant for finite-dimensional Hilbert space where the constant $A$ can be expressed in terms of the extremal eigenvalues of $\hat H$.
\begin{equation}
    F^\text{R}_{\hat \rho}=\frac{\mathcal Q_{\hat \rho}(\hat H_\text{coll})}{(h_{\max}-h_{\min})^2},
\end{equation}
satisfying $F_{\hat\rho}^\text{R}\leq n^2$ (remember the difference by a factor $4$ between the variance and the quantum Fisher information).

\subsubsection{Discussion on the extensions}
In this section, we briefly compare the advantages and drawbacks of the various extensions of $F$ to mixed states that were previously proposed. It is important to note that these extensions are presented in order of decreasing computational complexity, with $F^\text{CR}$ being the most computationally demanding and $F^\text{R}$ the least. On the other hand, we also have
\begin{equation}
    F^\text{R}\leq F^\text{S}\leq F^\text{CR}.
\end{equation}
The left inequality follows from the fact that if each term in the support of $\hat \rho$ satisfies $\max_i\Delta_{\ket{\psi_i}}(\hat H_i)\leq A$, then we also have $\sup\limits_{\ket{\psi}\in\operatorname{Supp}(\hat\rho)} \max_i \Delta_{\ket{\psi}}(\hat H_i)\leq A$. The right inequality, on the other hand, arises from the fact that $F^\text{CR}$ is the maximal convex extension of $F$ that coincides with $F$ on pure states. Since $F$ can serve as a witness of entanglement by violating the inequality $F\leq n$, larger extensions to mixed states are more effective in detecting entanglement. Therefore, $F^\text{CR}$ is the extension that provides the most information about the entanglement of a given state.

To better understand the merit and usefulness of the different extensions, we investigate how noise affects the amount of entanglement estimated by our quantifier. Since the extensions are computationally challenging, we limit our study to a simple noise toy model and leave more detailed explorations for future work. We begin with an $n$-qubit state $\ket{\psi}$, for which $F_{\ket{\psi}} = n^2$. For a noise parameter $\epsilon \in [0, 1]$, we consider two noisy versions of the state: $\hat \rho_\epsilon = (1-\epsilon)\ketbra{\psi}{\psi} + \epsilon \frac{\hat{\1}}{2^n}$ and $\hat \sigma_\epsilon = (1-\epsilon)\ketbra{\psi}{\psi} + \epsilon \ketbra{\varphi}{\varphi}$, where $\ket{\varphi}$ is a state with $F_{\ket{\varphi}} = 0$. The state $\hat \rho_\epsilon$ is obtained by applying a depolarizing channel to $\ket{\psi}$, while $\hat \sigma_\epsilon$ is formed by mixing the state $\ket{\psi}$ with $\ket{\varphi}$. Although the second noisy state is less physically motivated, it represents the simplest mathematical scenario where an optimal state is mixed with a state having low $F$.

The definition and computation details are provided in Appendix \ref{appendix: mixed computation}. Although we are unable to compute $F_{\hat \rho_\epsilon}^\text{CR}$, we can obtain the following
\begin{subequations}\label{eq: mixed state computation}
    \begin{equation}
        F_{\hat \rho_\epsilon}^\text{S}=F_{\hat \rho_\epsilon}^\text{R}=\frac{(1-\epsilon)^2}{(1-\epsilon)+\epsilon/2^{n-1}}n^2
    \end{equation}
    \begin{equation}
        F_{\hat \sigma_\epsilon}^\text{CR}=F_{\hat \sigma_\epsilon}^\text{S}=F_{\hat \sigma_\epsilon}^\text{R}=(1-\epsilon)n^2
    \end{equation}
\end{subequations}
We observe that these formulas continuously transition from the maximal value of $n^2$ (for $\epsilon=0$) to a minimum of 0 (for $\epsilon$), confirming that the noise reduces the amount of entanglement. Based on these expressions, it may seem that there is no difference between the various extensions of $F$ to mixed states. However, this is merely an artifact resulting from the choice of systems used in the computation. Specifically, the computation was performed for qubit systems, where the local variance is bounded by 1. Since the primary focus of this work is not on explicit computations, we leave further exploration to future studies.

\section{$k$-entanglement}\label{sec: k sep}

\subsection{The original statement in a general setting}\label{subsec: original statement in a general setting}
Following ideas developed in \cite{guhne_entanglement_2009,hyllus_fisher_2012,chen_wigner-yanase_2005,pezze_quantum_2014}, we use the formula obtained previously to analyze the notion of $k$-entanglement. We recall that one can consider a finer definition of the entanglement properties of states living in $\mathcal H ^{\otimes n}$ going beyond the separable/entangled dichotomy. The concept of $k$-entanglement is based on the following intuitive idea: for a set of $n$ parties, how many distinct groups can we form such that there are no quantum correlations or entanglement between the groups? Since entanglement is a valuable resource in many multipartite protocols - such as those used in sensing, computing, and communication - the study of $k$-entanglement is crucial. It helps us identify situations where the amount of entanglement in a given resource state is insufficient for the protocol to succeed. Additionally, it can guide the design of protocols that leverage less entangled states as resources, which are easier to produce. Formally, we say that a pure state $\ket{\psi}$ is $k$-separable (for integer $1\leq k\leq n$) if it can be written as a product of states that are at most entangled over $k$ copies of $\mathcal H$. This means that we can write (after permuting the systems)
\begin{equation}
    \ket{\psi}=\ket{\phi_1}\otimes \cdots\otimes \ket{\phi_l},
\end{equation}
where each state $\ket{\phi_i}\in\mathcal H^{\otimes r_i}$ with $r_i\leq k$. We say that a state is $k$-entangled if it is $k$-separable but not $(k-1)$-separable. The definitions can be extended to mixed states easily: $\hat \rho$ is $k$-separable if it is the mixture of pure $k$-separable states:
\begin{equation}
    \hat \rho =\sum_i p_i\ketbra{\psi_i},
\end{equation}
where $\ket{\psi_i}$ is $k$-separable. As for pure states, a mixed state is $k$-entangled if it is $k$-separable but not $(k-1)$-separable.\\

Following the proofs of \cite{hyllus_fisher_2012,chen_wigner-yanase_2005, pezze_quantum_2014}, using the knowledge that all $k$-partite pure states $\ket{\psi}$ satisfy $F_{\ket{\psi}}\leq k^2$, it follows that any $k$-separable pure state $\ket{\psi}$ has to satisfy 
\begin{equation}\label{eq: bound k entanglement}
    F_{\ket{\psi}}\leq \left\lfloor \frac{n}{k}\right\rfloor k^2+\left(n-\left\lfloor\frac{n}{k}\right\rfloor k\right)^2\leq nk.
\end{equation}
The left inequality is identical to the one derived for the Quantum Fisher information \cite{hyllus_fisher_2012, pezze_quantum_2014} and the Wigner-Yanase skew information \cite{chen_wigner-yanase_2005}. The right inequality provides an upper bound for the complex term by using a simpler expression that does not involve integer parts. To the best of our knowledge, and surprisingly given its simplicity, this is a new inequality that yields the more concise bound $F\leq nk$ for $k$-separable states. Although this bound is slightly less optimal, equality holds between both upper bounds when $n/k$ is an integer. As this statement can be made even more general, the proof is delayed for later. This leads to a similar $k$-entanglement criterion: if the inequality is violated for some value of $k$, then the state is at least $(k+1)$-entangled. Once again, similarly to the original situation, the statement can be extended to mixed states, provided the extension used is convex.

\subsection{The general statement}\label{subsec: k sep general statement}
The phenomenon observed previously is even more general. It can be extended as follows: we still consider a Hilbert space  $\mathcal H$ and we assume that we have two maps $F$ and $f$ defined on 
\begin{equation}
    \begin{aligned}
        F&:\bigcup_{n=1}^{+\infty} \mathcal D(\mathcal H^{\otimes n}) \to \R,\\
        f&:\R\to \R,
    \end{aligned}
\end{equation}
where $\mathcal D(\cdot)$ denotes the set of mixed states of the corresponding Hilbert space. If we assume that
\begin{itemize}
    \item For all integer $n$, $F$ is convex on $\mathcal D(\mathcal H^{\otimes n})$.
    \item $F$ is sub-additive on pure product states : $F(\ketbra{\psi}\otimes\ketbra{\phi})\leq F(\ketbra{\psi})+F(\ketbra{\phi})$.
    \item The function $f$ is convex and $f(0)=0$.
    \item For all integer $n$, there exist a subset $\Gamma_n\subset\mathcal D(\mathcal H^{\otimes n})$ on which $F(\hat \rho)\leq f(n)$.
\end{itemize}
We further define the notion of $k$-producibility and $k$-entanglement over $\Gamma=\bigcup \Gamma_n$, simply by adapting the previous definition by asking that all states appearing in the decompositions are in some $\Gamma_n$. This leads to the following general statement. If $\hat \rho\in \mathcal D(\mathcal H^{\otimes n})$ is $k$-separable quantum state over $\Gamma$ then
\begin{equation}\label{eq: k sep general ineq}
    F(\hat \rho)\leq \left\lfloor \frac{n}{k}\right\rfloor f(k)+f\left(n-\left\lfloor \frac{n}{k}\right\rfloor k\right)\leq \frac{n}{k}f(k).
\end{equation}

See appendix \ref{appendix: k sep} for a detailed proof. The sets $\Gamma_n$ correspond to either the set where $F$ is defined or where the condition $F(\hat \rho)\leq f(n)$ is verified. It can be a subclass of state where this constraint is stricter. This is the case for example, for states described in Sec.~\ref{subsec: adding thickness} where a constraint on the thickness of the states was added. The right inequality generalizes the corresponding inequality in Eq.~(\ref{eq: bound k entanglement}), providing a simplified but less optimal upper bound, $F\leq \frac{n}{k}f(k)$. Naturally, when $k$ divides $n$, both expressions coincide.

Regardless of the extension to mixed states that we consider, the quantifier $F$ defined at the beginning of the paper satisfies both convexity and sub-additivity (see Eq.~\ref{eq: subadditivity}). Now, considering $\Gamma_n=\mathcal D(\mathcal H^{\otimes n})$ and the function $f:x\mapsto x^2$, the other two hypotheses are also satisfied. Our general theorem then reduces to the inequalities in (\ref{eq: bound k entanglement}). The flexibility of our formulation enables us to focus on a smaller subset of states, $\Gamma_n$, for which stricter inequalities, such as $F(\hat \rho)\leq f(n)$, hold for other functions $f$. In the next section, we will explore the consequences of these assumptions in the context of non-zero thickness.

The general approach for detecting $k$-entanglement has already been explored in \cite{hong_measure_2012}. Building on generalizations of concurrence\cite{wootters_entanglement_1998}, the authors introduced a family of quantifiers, denoted $C_{k-\text{ME}}$, which depend on the parameter $k$. These quantifiers effectively detect $k$-separable states. Therefore, to determine the level of $k$-entanglement for an unknown state, one must compute several such quantifiers. In contrast, our method relies on the calculation of a single quantifier, with the value of $F$ providing partial information about the entanglement level. Moreover, the definition of $C_{k-\text{ME}}$ involves considering all possible partitions of the $n$ systems into groups of size $k$, which grows rapidly with $n$ and $k$, making the computation challenging even for pure states. In this context, our quantifier presents a computationally simpler alternative for detecting $k$-entanglement.

\subsection{$k$-entanglement and non-zero thickness}\label{subsec: k sep and non-zero thickness}
In this section, we mix the two ingredients of $k$-entanglement and non-zero thickness. As discussed in the previous sections, both low values of $k$-entanglement and high thickness in the orthogonal direction (i.e., high values of $\zeta$) reduce the value of $F$. In realistic experimental scenarios, it may be challenging to produce states that are both highly entangled (with a large $k$) and have a thin spectrum (low $\zeta$). However, by relaxing these constraints and exploring the trade-off between these two parameters, we may identify experimentally feasible states that yield high values of $F$. If we consider a $k$-separable state $\ket{\psi}$, for which all the states appearing in the decomposition satisfy the non-zero thickness hypothesis $\Delta \hat P_j\geq \zeta  \Delta\hat P_1$, we have,
\begin{equation}\label{eq: ineq for k sep and non-zero thickness}
    \begin{aligned}
        F&\leq  \left\lfloor \frac{n}{k}\right\rfloor \frac{k^2}{(1-\zeta )+k\zeta }+\frac{\left(n-\left\lfloor\frac{n}{k}\right\rfloor k\right)^2}{(1-\zeta )+\left(n-\left\lfloor\frac{n}{k}\right\rfloor k\right)\zeta }\\
        &\leq  \frac{kn}{(1-\zeta )+k\zeta }.
    \end{aligned}
\end{equation}
Indeed, this situation fits the general picture presented in the previous section, where $\Gamma_n$ are the states satisfying the width condition and $f:x\mapsto \frac{x^2}{(1-\zeta )+x\zeta }$ is convex and satisfy $f(0)=0$ (it is simple to verify that $f''\geq 0$).\\

The inequality~(\ref{eq: ineq for k sep and non-zero thickness}) is a witness of at least $k+1$-entanglement under the assumption of non-zero width given by $\zeta$. The previous discussion of non-zero thickness is particularly relevant for time-frequency states. Indeed, in finite dimension, all states are physical, and thus {\it a priori}  their production is experimentally feasible. Although the non-zero thickness constraint can model the inability to produce highly correlated entangled states among distant parties, there are no fundamental limits. In the case of time-frequency states, the situation is different. As we have discussed, perfectly correlated states saturating the bound are non-normalizable states, thus only approximation can be produced in the lab. The non-zero thickness parameter $\zeta$ is thus a measure of how close to the ideal case are the experimental capabilities.\\

Recall that for a $k$-separable mix state for which all pure states appearing in its decomposition satisfy $\Delta\hat P_j\geq \zeta \Delta \hat P_1$ for $j\geq 2$, we have the inequality
\begin{equation}\label{eq: inequality with non-zero thickness}
    F\leq \frac{kn}{(1-\zeta )+k\zeta }.
\end{equation}
As previously analyzed, a violation of this inequality is the signature of $k+1$-entanglement. Recall that the quantity $F$ is constructed from a metrological point of view. It can be understood as a measure of the metrological usefulness of the entanglement. Indeed, it is constructed as the quotient of the quantum Fisher information (which is the standard measure of metrological capability) and the local variance (which quantifies the local metrological resource). From this insight, we want to study further the right-hand side of inequality~(\ref{eq: inequality with non-zero thickness}). As an upper bound on the quantity $F$ constrained by both the width parameter $\zeta$ and the size of local systems $k$, it measures the potential equivalence between two sets of parameters. A first interesting study is the comparison between the situations $(k,\zeta=0)$ and $(k=n,\zeta)$. This corresponds on one hand to the case of a $k$-separable state that has no thickness constraints. On the other, this corresponds to a potentially fully entangled state which thickness is restricted by the parameter $\zeta$. Asking that these two situations witness the same metrological bound, we arrive at the equality
\begin{equation}
    kn=\frac{n^2}{(1-\zeta)+n\zeta}.
\end{equation}
This can be solved for either $k$ or $\zeta$ so that 
\begin{align}\label{eq: extreme tradeoff n zeta}
    k=\frac{n}{(1-\zeta)+ n\zeta}, && \zeta =\frac{n-k}{k(n-1)}.
\end{align}
\begin{figure*}
    \centering
    \begin{tabular}{c c}
       \includegraphics[width=0.4\linewidth]{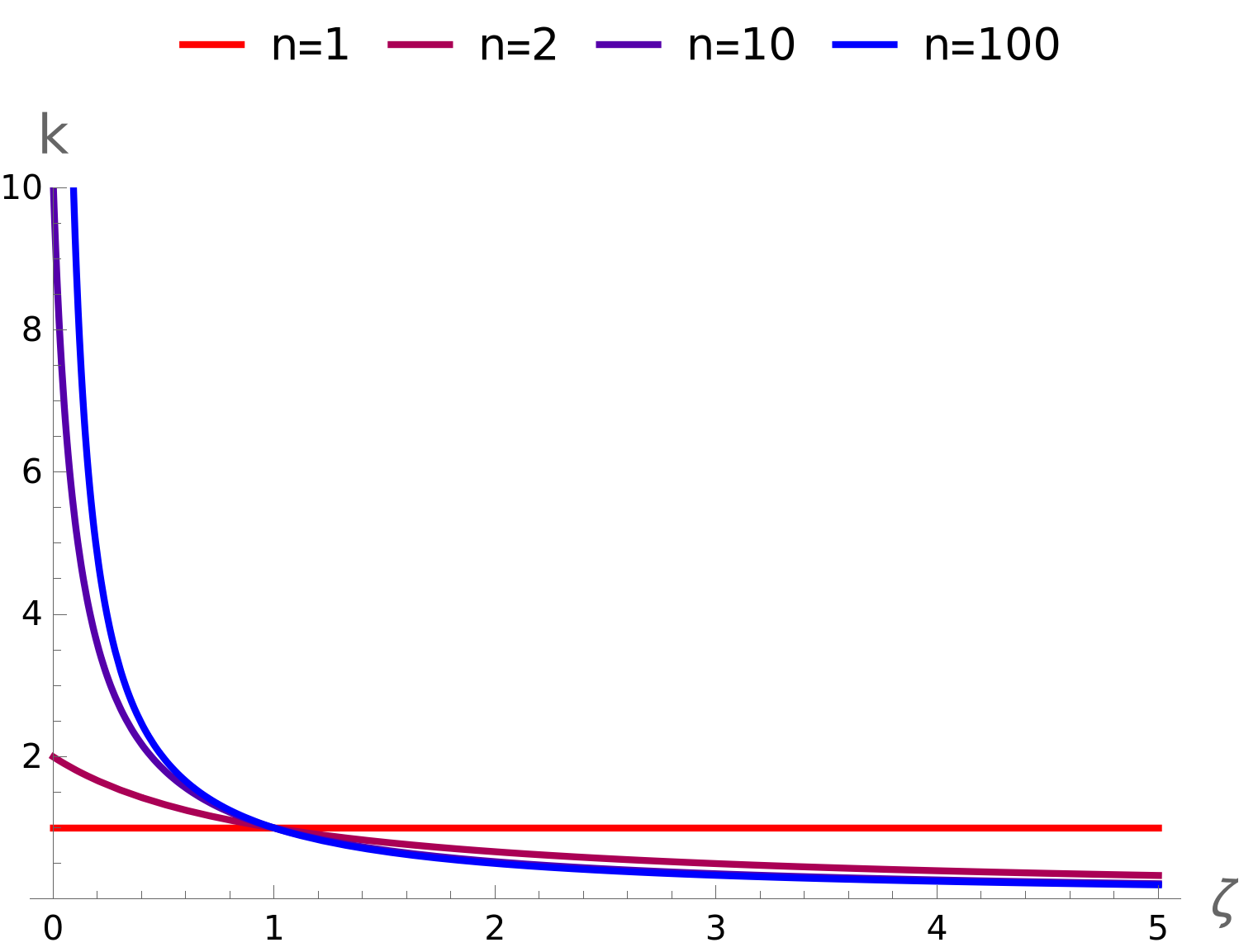}  &  \includegraphics[width=0.4\linewidth]{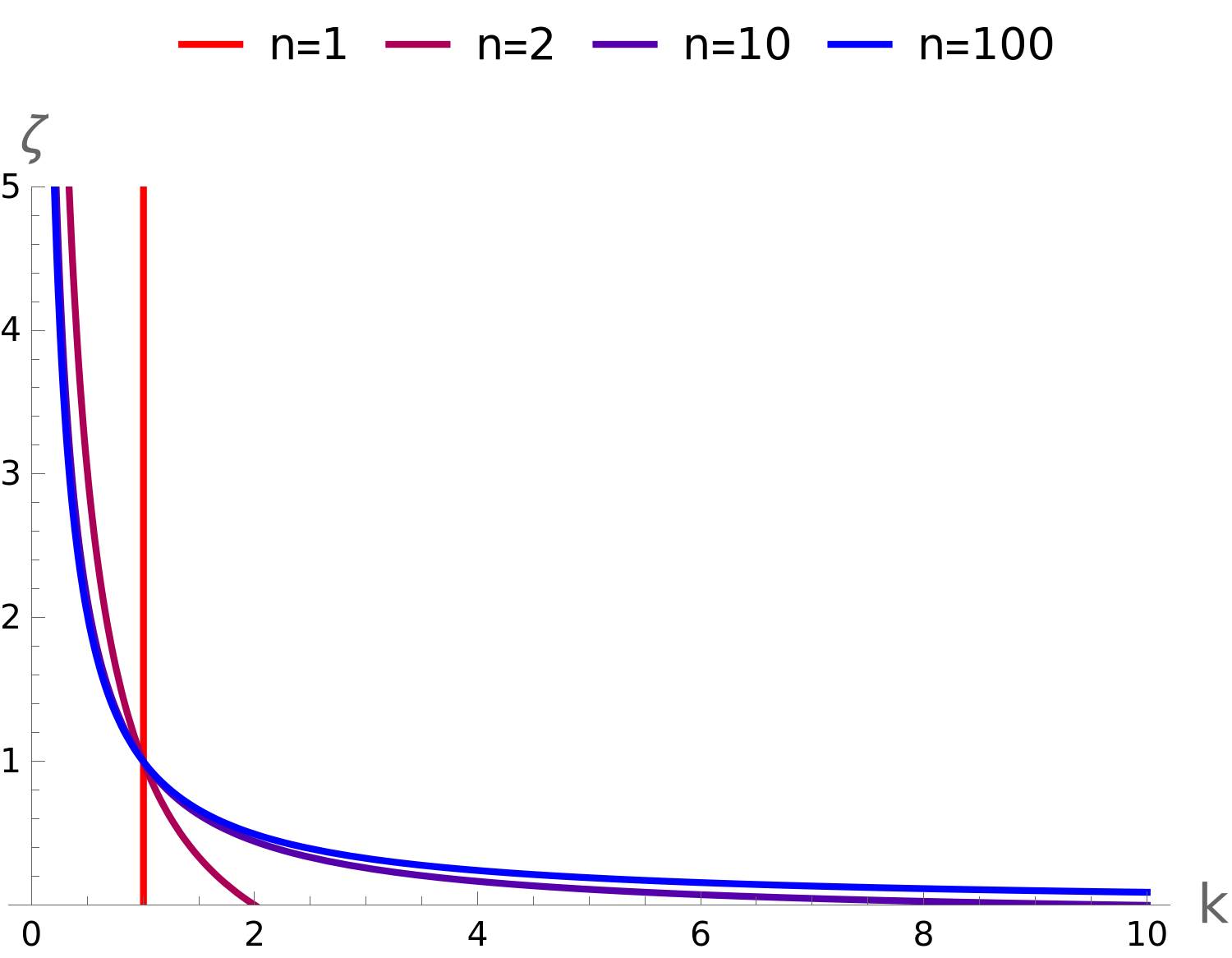}
    \end{tabular}
    \caption{Comparison of the values of $k$ and $\zeta$ such that a $k$-entangled state with no thickness $(k,\zeta=0)$ produces the same value of the upper bound of $F$ as a fully entangled state with thickness $\zeta$ $(k=n,\zeta)$ is shown. The left (right) plot expresses $k$ ($\zeta$) as a function of $\zeta$ ($k$) according to the left (right) equality in Eq.~(\ref{eq: extreme tradeoff n zeta}). The different curves correspond to different values of $n$. Since expressing one parameter in terms of the other, or vice versa, is an inversion, the two plots are related by a reflection across the line $y=x$. It is observed that all the curves are decreasing. This can be interpreted as follows: larger values of $\zeta$ (i.e., higher thickness) result in a smaller value of $F$, meaning that less $k$-entanglement is required to achieve the same value of $F$.}
    \label{fig: plot varying n}
\end{figure*}
It is interesting to verify these relations for the extreme cases. The case $k=n$ naturally corresponds to $\zeta=0$ as we are comparing the same situations. The other extreme is more interesting. Indeed, we can verify that $k=1$ corresponds to $\zeta=1$. In the first case, this corresponds to looking at a state which is $1$-separable {\it i.e.,} separable. Such a state necessarily exhibits the same thickness in all directions, which is exactly what is under the constraint $\Delta P_j = \Delta P_1$. In Fig. \ref{fig: plot varying n} we plot the relations of Eq. (\ref{eq: extreme tradeoff n zeta}).\\

More generally, for any value $f\in[0,n^2]$ we can solve for the parameters $(k,\zeta)$ that would lead to 
\begin{equation}
    \frac{kn}{(1-\zeta)+k\zeta}=f.
\end{equation}
Solving for $\zeta$ leads to $\zeta=\frac{kn-f}{f(k-1)}$. The condition $\zeta\geq 0$ imposes that a solution for $\zeta$ can exist if and only if $kn\geq f$. This makes intuitive sense as $kn$ is the largest value reachable by a $k$-separable state. Inversely, one can solve for $k$ and get $k=\frac{f(1-\zeta)}{n-f\zeta}$. This time $k$ must satisfy $k\leq n$ which happens when $f\leq \frac{n^2}{(1-\zeta)+n\zeta}$. Once again, this makes sense as this is the highest value one can reach with a non-zero thickness $\zeta$. We plot in Fig.~\ref{fig: plot varying f} both relations.\\
\begin{figure*}
    \centering
    \begin{tabular}{c c}
       \includegraphics[width=0.4\linewidth]{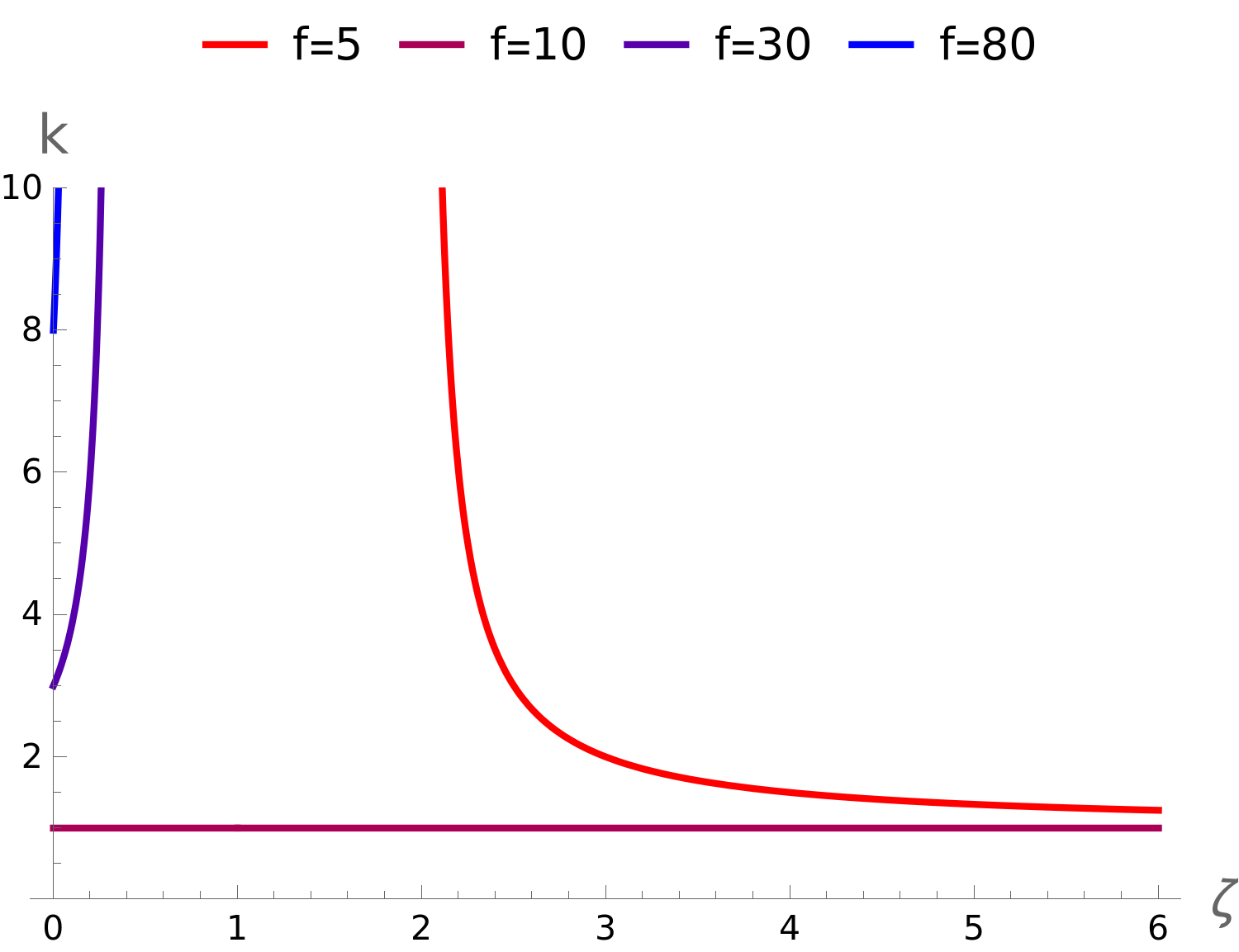}  &  \includegraphics[width=0.4\linewidth]{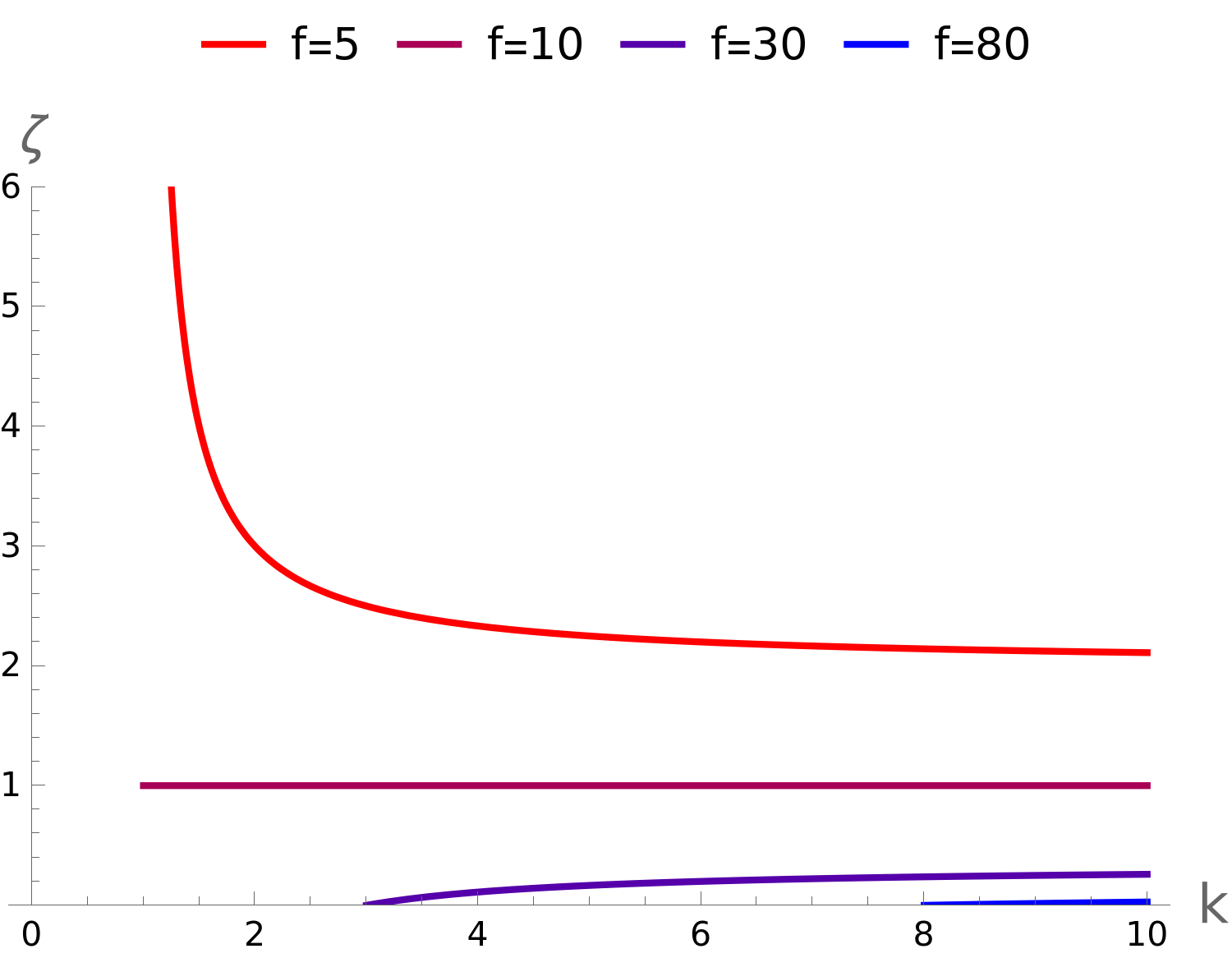}
    \end{tabular}
    \caption{Presentation of the required values of $k$ and $\zeta$ necessary to achieve the maximal value of $F=f$, plotted for different values of $f$. Both graphs are generated with $n=10$. The left (resp. right) plot shows, as a function of $\zeta$ (resp. $k$), the minimal (resp. maximal) value of $k$ (resp. $\zeta$) such that a state with parameters ($k$, $\zeta$) can achieve a value of $F$ at least $f$. Both plots are generally reflections of each other with respect to the line $y=x$. The only exception is the line $f=10=n$, which behaves differently. This can be understood by noting that for $f=n$, the condition $\frac{kn}{(1-\zeta)+\zeta k}=n$ is equivalent to $(\zeta-1)(k-1)=0$, which gives the solutions $k=1$ or $\zeta=1$. Intuitively, this means that if a state is fully separable ($k=1$) or exhibits rotational symmetry in all directions ($\zeta=1$), the value of the other parameter does not affect the quantity $F$. We observe that the curve for $f=5$ lies below the separable/rotationally symmetric limit $f=n$. This implies that a state can achieve $f=5$ with both large values of $\zeta$ and low values of $k$-entanglement. On the other hand, the curves for $f=30$ and $f=80$ demonstrate that high values of $\zeta$ and low values of $k$ are required to reach $F=f$. These observations are consistent with the inequalities derived from the conditions $k \leq n$ and $\zeta \geq 0$ discussed in the main text.}
    \label{fig: plot varying f}
\end{figure*}
All these computations show that when wanting to reach a certain level of entanglement in the collective variable, one can either optimize over the size $k$ of the local entanglement or reduce the width $\zeta$. In practical situations, there may be a trade-off between these two optimizations, and the computation above allows them to understand how to perform this optimization.

\section{Conclusion}\label{section: conclusion}
We have presented a general formulation of entanglement witnesses based on the variance of collective observables that applies to any Hilbert space. In addition, we have related our work to the notion of $k$-entanglement and studied different ways to compare it to entanglement resources. This work bridges the gap between numerous existing studies by providing a unified perspective. Our results are based on the introduction of a generalized entanglement quantifier, and we have discussed some of its properties as the type of entanglement it detects. The extension of the presented quantifier to mixed states and the exploration of all the possibilities and properties of the introduced quantifier require further analysis.

Another important contribution of the present manuscript is providing a pictorial geometrical representation of our results that helps the intuitive interpretation of the entanglement criteria and the introduction of the spectral space. Throughout the paper, this space allowed for a clear illustration of the idea developed and served as a visual aid for studying the quantifier $F$. This spectral space may be further studied in future works and may become a valuable tool for the development and presentation of new ideas. As collective variables have proven to be a useful tool for many different studies in the realm of quantum optics and quantum information such as improving metrological protocols or error correction schemes, the theoretical tools we provided in this study may allow for further progress in these research fields.

\newpage

\section{Acknowledgments}\label{section: acknoledgments}
We acknowledge funding from the Plan France 2030
through the project ANR-22-PETQ-0006.

\bibliographystyle{unsrturl} 
\bibliography{refs}

\appendix
\onecolumngrid

\section{Cauchy-Schwarz inequality for the general collective operator}\label{appendix: cauchy-schwarz general}
In Section~\ref{subsec: remark def coll op} we discuss the possibility of defining a collective operator via the more general formula
\begin{equation}
    \hat H_\text{coll}=\sum_{i=1}^n c_i \hat H_i,
\end{equation}
where the coefficient $c_i$ are any real numbers satisfying the normalization condition $\sum_{i=1}^n \abs{c_i}=n$. In this short appendix, we show that the condition $c_i=\pm 1$ is necessary so that the inequality
\begin{equation}
    \Delta(\hat H_\text{coll})\leq n^2\max_l\Delta \hat H_l,
\end{equation}
can have states saturating it. The idea is to look at the derivation of this inequality via the use of the Cauchy-Schwarz inequality. This is done as follows
\begin{subequations}
    \begin{align}
        \Delta\hat H_\text{coll}&=\sum_{i,j=1}^n \abs{c_ic_j}\operatorname{Cov}(\hat H_i,\hat H_j)\\
        &\leq\sum_{i,j=1}^n \abs{c_ic_j}\sqrt{\Delta\hat H_i\Delta \hat H_j}\\
        &\leq\left(\sum_{i=1}^n\abs{c_i}\right)^2\max_l\Delta \hat H_l\\
        &\leq \left(\sum_{i=1}^n\abs{c_i}^2\right)\left(\sum_{i=1}^n1\right)\max_l\Delta \hat H_l\\
        &=n^2\max_l\Delta \hat H_l,
    \end{align}
\end{subequations}
where the first inequality follows from the application of the Cauchy-Schwarz inequality to the quantum covariance. The second inequality is obtained by bounding $\Delta \hat H_i$ by the maximal local variance, while the third inequality is again derived using the Cauchy-Schwarz inequality, this time applied to the standard inner product on $\R^n$. The second use of the Cauchy-Schwarz inequality is independent of the states considered. Therefore, if this inequality is strict, no state can saturate inequality (\ref{eq: general bound collective global}). By analyzing the condition for equality in this step, we find that equation (\ref{eq: general bound collective global}) can only be saturated if $(\abs{c_1},\dots,\abs{c_n})\propto (1,\dots,1)$, which, given the normalization condition, implies that $c_i=\pm1$.

\section{Short on convex roof following Ref.~\cite{uhlmann_roofs_2010}}\label{appendix: convex roof}
Let's consider a convex set $\mathcal C$ and define its extreme points $\delta\mathcal{C}$ as
\begin{equation}
    \delta\mathcal C=\left\{x\in\mathcal C\;\middle|\; \forall y,z\in\mathcal C,\;\forall t\in(0,1),\; x=ty+(1-t)z\Rightarrow x=y=z\right\}.
\end{equation}
Intuitively, this corresponds to the points that cannot be written as a non-trivial convex combination. In the context of quantum mechanics, $\mathcal C$ is the set of all mixed states, while $\delta\mathcal C$ is the set of the pure states.  One can also comment, that {\it a priori} $\delta\mathcal C\neq \partial \mathcal C$ where $\partial \mathcal C$ is the topological boundary of $\mathcal C$. We further assume that $\delta\mathcal C$ generate the set $\mathcal C$
\begin{equation}
    \forall x\in\mathcal C,\; \exists x_1,\dots,x_n\in\delta\mathcal C\!:\; \exists p_1,\dots,p_n\geq0\!:\; \sum_{i=1}^n p_i=1\text{ and } x=\sum_{i=1}^n p_i x_i,
\end{equation}
meaning that each point of $\mathcal C$ can be written as a convex combination of extreme points. If $f: \delta\mathcal C\to \R$ is an arbitrary real function defined on the extreme points of $\mathcal C$, one can define 
\begin{align}
    F: \mathcal C &\to \R\\
    x&\mapsto \inf_{\{p_i|x_i\}}  \sum_i p_if(x_i),
\end{align}
where the infimum is taken over all convex decomposition of $x$: $x=\sum p_i x_i$ with $p_i\geq 0$. One can note that the hypothesis that $\mathcal C$ is generated by $\delta\mathcal C$ is not strictly necessary. With the convention that $\inf \emptyset=+\infty$, we still get a valid formula for $F$ that may take the value $+\infty$. Such construction satisfies the following properties
\begin{enumerate}
    \item For $x\in\delta\mathcal C$, $F(x)=f(x)$. On the extreme points, $F$ reduces to $f$.
    \item $F$ is convex.
    \item $F$ is the largest convex function that is equal to $f$ on $\delta\mathcal C$. This means that if $G$ is a convex function such that $\forall x\in\delta\mathcal C,\; G(x)=f(x)$ then $\forall x\in\mathcal C,\; G(x)\leq F(x)$ 
\end{enumerate}
We verify it as follows

\begin{proof}
Due to the definition of $\delta\mathcal C$, for any $x\in\delta\mathcal C$, there exists only one convex decomposition: the trivial one $x=x$. As such it follows that $F(x)=f(x)$.\\

$F$ is convex since, if we take $x,y\in\mathcal C$ and $t\in[0,1]$ then for any convex decomposition of $x=\sum p_i x_i$ and $y=\sum_j q_j y_j$
\begin{equation}
    t \sum_i p_i f(x_i) + (1-t)\sum_j q_j f(y_j)=  \sum_i tp_i f(x_i) + \sum_j (1-t)q_j f(y_j)\geq F( tx+(1-t)y),
\end{equation}
since $tx+(1-t)y=\sum tp_i x+\sum (1-t)q_j$ is a valid decomposition. By taking the infimum over all decomposition of $x$ and $y$, we indeed, have
\begin{equation}
    tF(x)+(1-t)F(y)\geq F(tx+(1-t)y),
\end{equation}
Thus $F$ is indeed convex.\\

Finally if $G$ is a convex function that equal $f$ on $\delta\mathcal C$, for any decomposition of $x\in\mathcal C$, $x=\sum_i p_i x_i$, with $x_i\in\delta\mathcal C$, we have
\begin{equation}
    G(x)\leq \sum_i p_i G(x_i)= \sum_i p_i f(x_i).
\end{equation}
The inequality comes from the convexity of $G$, while the equality comes from the fact that $G$ equals $f$ on $\delta\mathcal C$. By optimizing all possible such decompositions, we indeed get 
\begin{equation}
    G(x) \leq \inf_{\{p_i|x_i\}} \sum_i p_i f(x_i)=F(x).
\end{equation}
\end{proof}    

\section{Proof on support}\label{appendix: support}
\subsection{Two definitions of the support}\label{appendix: def support}
We first verify the equivalence between the two definitions of the support given in the main text.

\begin{proof}
\underline{$\Rightarrow$:} If $\ket{\psi}\in(\ker \hat\rho)^\perp$ then by definition for all state $\ket{\sigma}\in\ker\hat\rho$, we have $\bra{\psi}\ket{\sigma}=0$. Let us now consider a diagonalization of $\hat \rho$
\begin{equation}
    \hat \rho=\sum_i \lambda_i \ketbra{\psi_i}
\end{equation}
where $\lambda_i>0$ and the sum is finite since the rank of $\hat\rho$ is finite. We can then define $\lambda=\min_i \lambda_i$. Since there is only a finite number of non-zero eigenvalues, we know that $\lambda>0$. Then we claim that $\hat\rho-\lambda\ketbra{\psi}$ is a positive operator. Indeed, for any state $\ket{\phi}$ we verify that $\bra{\phi}(\hat\rho-\lambda\ketbra{\psi})\ket{\phi}\geq0$. To show this, we decompose $\ket{\phi}=\ket{\sigma}+\ket{\tau}$ where $\ket{\sigma}\in\ker\hat\rho$ and $\ket{\tau}\in(\ker\hat\rho)^\perp$. We get
\begin{subequations}
    \begin{align}
        \bra{\phi}(\hat\rho-\lambda\ketbra{\psi})\ket{\phi}&=\bra{\sigma}\hat\rho\ket{\sigma}-\lambda\bra{\sigma}\ket{\psi}\bra{\psi}\ket{\sigma}+\bra{\sigma}\hat\rho\ket{\tau}-\lambda\bra{\sigma}\ket{\psi}\bra{\psi}\ket{\tau}\notag\\
        &\qquad+\bra{\tau}\hat\rho\ket{\sigma}-\lambda\bra{\tau}\ket{\psi}\bra{\psi}\ket{\sigma}+\bra{\tau}\hat\rho\ket{\tau}-\lambda\bra{\tau}\ket{\psi}\bra{\psi}\ket{\tau}\\
        &=\bra{\tau}\hat\rho\ket{\tau}-\lambda\bra{\tau}\ket{\psi}\bra{\psi}\ket{\tau}
    \end{align}
\end{subequations}
since $\ket{\sigma}\in\ker\hat\rho$ thus $\hat\rho\ket{\sigma}=0$ and $\bra{\psi}\ket{\sigma}=0$. Since $\ket{\tau}\in(\ker\hat\rho)^\perp$ we can decompose it over the states $\ket{\psi_i}$: $\ket{\tau}=\sum \tau_i\ket{\psi_i}$. Thus
\begin{equation}
    \bra{\tau}\hat\rho\ket{\tau}=\sum_i \abs{\tau_i}^2\lambda_i\geq \lambda\sum_i \abs{\tau_i}^2=\lambda\bra{\tau}\ket{\tau}
\end{equation}
Finally, by Cauchy-Schwarz inequality, we have $\bra{\tau}\ket{\psi}\bra{\psi}\ket{\tau}\leq \braket{\tau}$. In the end, we Indeed, have
\begin{equation}
    \bra{\phi}(\hat\rho-\lambda\ketbra{\psi})\ket{\phi}\geq0
\end{equation}
To finish the proof, we simply have to say that since $\hat\rho-\lambda\ketbra{\psi}$ is a positive operator, we can diagonalize it and get $\hat\rho-\lambda\ketbra{\psi}=\sum_i\alpha_i\ketbra{\psi_i'}$ and we obtain
\begin{equation}
    \hat\rho =\lambda\ketbra{\psi}+\sum_i\alpha_i\ketbra{\psi_i'}
\end{equation}

\underline{$\Leftarrow$:} We now assume that, we can write $\hat\rho$ as a convex decomposition containing $\ket{\psi}$ 
\begin{equation}
    \hat\rho=\alpha\ketbra{\psi}+\sum_i\alpha_i\ketbra{\psi_i}
\end{equation}
If we consider $\ket{\phi}\in\ker\hat\rho$ we have 
\begin{equation}
    \bra{\phi}\hat\rho\ket{\phi}=\lambda\abs{\bra{\psi}\ket{\phi}}^2+\sum\alpha_i \abs{\bra{\psi_i}\ket{\phi}}^2=0
\end{equation}
Since we are only summing non-negative terms, this means that all the terms in the sum are zero, specifically we have $\bra{\psi}\ket{\phi}=0$. Since this is true for all states $\ket{\phi}$ in $\ker\hat\rho$, this means that $\ket{\psi}\in(\ker\hat\rho)^\perp$.
\end{proof}    

\subsection{The formula with support is convex}\label{appendix: formula support conv}
We now prove that the quantity $F^\text{S}$ defined in Eq.~(\ref{eq: def F supp}) of the main text is convex in the quantum states.
\begin{proof}
Let us consider two mixed state $\hat \rho_1$ and $\hat\rho_2$ and a convex combination $\hat\rho=t\hat\rho_1+(1-t)\hat\rho_2$. Since the quantum Fisher information is indeed, convex, we have
\begin{equation}
    \mathcal Q_{\hat\rho}(\hat H_\text{coll})\leq t\mathcal Q_{\hat\rho_1}(\hat H_\text{coll})+(1-t)\mathcal Q_{\hat\rho_2}(\hat H_\text{coll}).
\end{equation}
Moreover, due to the definition of the support in terms of convex decomposition, it is clear that 
\begin{equation}
    \operatorname{Supp}(\hat\rho_i)\subset \operatorname{Supp}(\hat\rho)\qquad\text{(for $i=1,2$)}
\end{equation}
Indeed, any decomposition of $\hat\rho_1$ and $\hat\rho_2$ induces a decomposition of $\hat \rho$. So if $\ket{\psi}$ is part of a decomposition of $\hat\rho_1$ or $\hat\rho_2$ then it is immediately part of a decomposition of $\hat\rho$. Taking the supremum, it follows that 
\begin{equation}
    \sup\limits_{\ket{\psi}\in\operatorname{Supp}(\hat\rho)} \max_i \Delta_{\ket{\psi}}( H_i)\geq \sup\limits_{\ket{\psi}\in\operatorname{Supp}(\hat\rho_j)} \max_i \Delta_{\ket{\psi}}( H_i)
\end{equation}
for $j=1,2$. With all of this, we can finally write
\begin{subequations}
    \begin{align}
        F^\text{S}_{\hat\rho}&=\frac{\mathcal Q_{\hat \rho}(\hat H_\text{coll})}{\sup\limits_{\ket{\psi}\in\operatorname{Supp}(\hat\rho)} \max_i \Delta_{\ket{\psi}}( H_i)}\\
        &\leq \frac{t\mathcal Q_{\hat \rho_1}(\hat H_\text{coll})+(1-t)\mathcal Q_{\hat \rho_2}(\hat H_\text{coll})}{\sup\limits_{\ket{\psi}\in\operatorname{Supp}(\hat\rho)} \max_i \Delta_{\ket{\psi}}( H_i)}\\
        &\leq t\frac{\mathcal Q_{\hat \rho_1}(\hat H_\text{coll})}{\sup\limits_{\ket{\psi}\in\operatorname{Supp}(\hat\rho_1)} \max_i \Delta_{\ket{\psi}}( H_i)}+(1-t)\frac{\mathcal Q_{\hat \rho_2}(\hat H_\text{coll})}{\sup\limits_{\ket{\psi}\in\operatorname{Supp}(\hat\rho_2)} \max_i \Delta_{\ket{\psi}}( H_i)}\\
        &=tF^\text{S}_{\hat\rho_1}+(1-t)F^\text{S}_{\hat\rho_2}
    \end{align}
\end{subequations}
\end{proof}  

\section{Computation of extensions of $F$ to mixed states}\label{appendix: mixed computation}
%\color{red!70!black}
In this section, we present the results from Eq.~(\ref{eq: mixed state computation}), but first, we precisely define the model. We assume the system consists of qubits, with $\mathcal{H} = \mathbb{C}^2$ and the collective Hamiltonian $\hat{H}_\text{coll} = \hat{Z}_1 + \cdots + \hat{Z}_n$. We define the following states
\begin{align}
    \ket{\psi}=\frac{1}{\sqrt{2}}(\ket{0\cdots0}+\ket{1\cdots1}) && \ket{\varphi}=\hat{\1}^{\otimes \frac{n}{2}}\hat X^{\otimes \frac{n}{2}}\ket{\psi}=\frac{1}{\sqrt{2}}(\ket{0\cdots01\cdots1}+\ket{1\cdots10\cdots0}),
\end{align}
where for all computations involving $\ket{\varphi}$ (which are relevant when considering $\sigma_\epsilon$), we assume that $n$ is even for simplicity. To perform the computations, we introduce a basis for the Hilbert space $(\mathbb{C}^2)^{\otimes n}$. For a binary string $s = s_1\cdots s_n$ with $s_i \in {0,1}$, we define
 \begin{equation}
     \ket{\psi_s^i}=\frac{1}{\sqrt{2}}(\ket{s}+i\ket{\overline{s}}),
 \end{equation}
 where $\overline{s}$ is the string obtained by flipping each bit of $s$: $\overline{s}_k = 1 - s_k$. Thus, we can rewrite the states as $\ket{\psi} = \ket{\psi_0^+}$ and $\ket{\varphi} = \ket{\psi_c^+}$, where $0$ represents the string of all zeros, and $c = 0\cdots01\cdots1$ is the string consisting of $n/2$ zeros followed by $n/2$ ones. Since $\ket{\psi_s^i} = i \ket{\psi_{\overline{s}}^i}$, we restrict our attention to strings that begin with a $0$. We define the set $\mathcal{S}$ as follows: $\mathcal S=\{0s_2\cdots s_n|s_k=0,1\}$. For $s,s'\in\mathcal S$ and $i,j=\pm1$ we can verify the following orthogonality relation
 \begin{equation}
     \braket{\psi_s^i}{\psi_{s'}^j}=\delta_{s,s'}\delta_{i,j}.
 \end{equation}
 Since $\abs{\mathcal S\times \{1,-1\}}=2^{n}$, the states $\ket{\psi_s^i}$ form a basis of $(\C^2)^{\otimes n}$. The quantum Fisher information appears in the definitions of $F^\text{S}$ and $F^\text{R}$, so we first compute $\mathcal Q_{\hat\rho_\epsilon}(\hat H_\text{coll})$ and $\mathcal Q_{\hat\sigma_\epsilon}(\hat H_\text{coll})$. We recall the expression of the quantum Fisher information for mixed states
 \begin{equation}
     \mathcal Q_{\hat \rho}(\hat A)=2\sum_{\lambda_i+\lambda_j>0}\frac{(\lambda_i-\lambda_j)^2}{\lambda_i+\lambda_j}\abs{\bra{i}\hat A \ket{j}}^2,
 \end{equation}
where the density matrix has been diagonalized as $\hat \rho=\sum_i \lambda_i\ketbra{i}$, and the sum runs over all pairs of indices $(i,j)$ satisfying $\lambda_i+\lambda_j>0$. Additionally, we observe that
\begin{equation}
    \bra{\psi_s^i}\hat Z_i\ket{\psi_{s'}^j}=(-1)^{s'_k}\braket{\psi_s^i}{\psi_{s'}^{-j}}=(-1)^{s_k}\delta_{s,s'}\delta_{i,-j},
\end{equation}
which leads to $\bra{\psi_s^i}\hat H_\text{coll}\ket{\psi_{s'}^j}=\delta_{s,s'}\delta_{i,-j}\sum_{k=1}^n (-1)^{s_k}$. With this, we can now compute $\mathcal Q_{\hat \rho_\epsilon}(\hat H_\text{coll})$ given the eigendecomposition of $\hat \rho_\epsilon$. Using the closure relation $\hat{\1}=\sum_{s\in\mathcal S,i=\pm1} \ketbra{\psi_s^i}$, we have
\begin{equation}
    \hat \rho_\epsilon=\left[(1-\epsilon)+\frac{\epsilon}{2^n}\right]\ketbra{\psi_0^+}+\frac{\epsilon}{2^n}\ketbra{\psi_0^-}+\frac{\epsilon}{2^n}\sum_{s\in \mathcal S\setminus\{0\}, i=\pm1}\ketbra{\psi_s^i}.
\end{equation}
Since $\hat \rho_\epsilon$ has only two distinct eigenvalues, and only pairs of eigenvectors with different eigenvalues contribute to the quantum Fisher information, we can proceed with the calculation as follows
\begin{subequations}
    \begin{align}
        \mathcal Q_{\hat \rho_\epsilon}(\hat H_\text{coll})&=4\sum_{(s,i)\in\mathcal S\times\{1,-1\}\setminus\{(0\cdots0,1)\}}\frac{[(1-\epsilon)+\epsilon/2^n-\epsilon/2^n]^2}{(1-\epsilon)+\epsilon/2^n+\epsilon/2^n}\abs{\bra{\psi_0^+}\hat H_\text{coll}\ket{\psi_s^i}}^2\\
        &=4\sum_{(s,i)\in\mathcal S\times\{1,-1\}\setminus\{(0\cdots0,1)\}}\frac{(1-\epsilon)^2}{(1-\epsilon)+\epsilon/2^{n-1}}\left(\delta_{0,s}\delta_{1,-i}\sum_{k=1}^n(-1)^{s_k}\right)^2\\
        &=4\frac{(1-\epsilon)^2}{(1-\epsilon)+\epsilon/2^{n-1}}n^2.
    \end{align}
\end{subequations}
We follow the same procedure to compute $\mathcal Q_{\hat \sigma_\epsilon}(\hat H_\text{coll})$. First, we diagonalize $\hat\sigma_\epsilon$, explicitly identifying the eigenvectors corresponding to the eigenvalue 0.
\begin{equation}
    \hat \sigma_\epsilon=(1-\epsilon)\ketbra{\psi_0^+}+\epsilon\ketbra{\psi_c^+}+ 0\sum_{(s,i)\in\mathcal S\times\{1,-1\}\setminus\{(0,+),(c,+)\}}\ketbra{\psi_s^i}.
\end{equation}
In this case, there are three distinct eigenvalues. We then have
\begin{subequations}
    \begin{align}
        \mathcal Q_{\hat \sigma_\epsilon}(\hat H_\text{coll})&=4\frac{(1-2\epsilon)^2}{1}\abs{\underbrace {\bra{\psi_0^+}\hat H_\text{coll}\ket{\psi_c^+}}_{0}}^2+4\frac{(1-\epsilon)^2}{1-\epsilon}\sum_{(s,i)\in\mathcal S\times\{1,-1\}\setminus\{(0,+),(c,+)\}} \abs{\underbrace{\bra{\psi_0^+}\hat H_\text{coll}\ket{\psi_s^i}}_{n\delta_{s,0}\delta_{i,-1}}}\notag\\
        &\qquad+4\frac{\epsilon^2}{\epsilon}\sum_{(s,i)\in\mathcal S\times\{1,-1\}\setminus\{(0,+),(c,+)\}}\abs{\underbrace{\bra{\psi_c^+}\hat H_\text{coll}\ket{\psi_s^i}}_{0}}\\
        &=4(1-\epsilon)n^2.
    \end{align}
\end{subequations}
Since we are working with qubits, the local variance of any pure state $\Delta_{\ket{\phi}}(\hat H_k)\leq 1$, which allows us to set $A=1$ in the definition of $F^\text{R}$. This leads to the following expressions
\begin{align}
    F_{\hat\rho_\epsilon}^\text{R}=\frac{(1-\epsilon)^2}{(1-\epsilon)+\epsilon/2^{n-1}}n^2 && F_{\hat\sigma_\epsilon}^\text{R}=(1-\epsilon)n^2.
\end{align}
For similar reasons, we have $\sup\limits_{\ket{\psi}\in\operatorname{Supp}(\hat\tau)} \max_i \Delta_{\ket{\psi}}(\hat Z_i)=1$ for $\hat \tau=\hat\rho_\epsilon,\hat \sigma_\epsilon$, as $\ket{\psi_0^+}$ is a state in the support with maximal variance.\\

Finally, we demonstrate that $F^\text{CR}_{\hat\sigma_\epsilon}=(1-\epsilon)n^2$. By definition, we have 
\begin{equation}
    F^\text{CR}_{\hat\sigma_\epsilon}=\inf_{\{p_i,\ket{\psi_i}\}} \sum_i p_i \frac{\Delta_{\ket{\psi_i}}(\hat H_\text{coll})}{\max_k\Delta_{\ket{\psi_i}}(\hat Z_k)}
\end{equation}
where the optimization is done over all decomposition of the form $\hat \sigma_\epsilon=\sum_i p_i\ketbra{\psi_i}$. As shown in Appendix \ref{appendix: support}, the states appearing in the decomposition must belong to $(\ker\hat \sigma_\epsilon)^\perp=\operatorname{Vect}(\ket{\psi_0^+},\ket{\psi_c^+})$. Since we have
\begin{align}
    \hat Z_k\ket{\psi_0^+}=\ket{\psi_0^-} && \hat Z_k\ket{\psi_c^+}=\left\{\begin{array}{cc} \ket{\psi_c^-}, & \text{if $k\leq n/2$}\\ -\ket{\psi_c^-}, & \text{if $k>n/2$}  \end{array}\right.
\end{align}
 we can compute the variance of $\hat Z_k$ on the state $\ket{\phi}=\alpha\ket{\psi_0^+}+\beta\ket{\psi_c^+}$. We have
 \begin{subequations}
     \begin{align}
         \bra{\phi}\hat Z_k\ket{\phi}&=\abs{\alpha}^2\bra{\psi_0^+}\hat Z_k\ket{\psi_0^+}+\abs{\beta}^2\bra{\psi_c^+}\hat Z_k\ket{\psi_c^+}+\alpha^*\beta\bra{\psi_0^+}\hat Z_k\ket{\psi_c^+}+\alpha\beta^*\bra{\psi_c^+}\hat Z_k\ket{\psi_0^+}\\
         &=\abs{\alpha}^2\bra{\psi_0^+}\ket{\psi_0^-}\pm\abs{\beta}^2\bra{\psi_c^+}\ket{\psi_c^-}+\alpha^*\beta\bra{\psi_0^+}\ket{\psi_c^-}\pm\alpha\beta^*\bra{\psi_c^+}\ket{\psi_0^-}\\
         &=0.
     \end{align} 
 \end{subequations}
And
 \begin{subequations}
     \begin{align}
         \bra{\phi}\hat Z_k^2\ket{\phi}&=\abs{\alpha}^2\bra{\psi_0^+}\hat Z_k^2\ket{\psi_0^+}+\abs{\beta}^2\bra{\psi_c^+}\hat Z_k^2\ket{\psi_c^+}+\alpha^*\beta\bra{\psi_0^+}\hat Z_k^2\ket{\psi_c^+}+\alpha\beta^*\bra{\psi_c^+}\hat Z_k^2\ket{\psi_0^+}\\
         &=\abs{\alpha}^2\bra{\psi_0^+}\ket{\psi_0^+}+\abs{\beta}^2\bra{\psi_c^+}\ket{\psi_c^+}+\alpha^*\beta\bra{\psi_0^+}\ket{\psi_c^+}+\alpha\beta^*\bra{\psi_c^+}\ket{\psi_0^+}\\
         &=\abs{\alpha}^2+\abs{\beta}^2=1.
     \end{align} 
 \end{subequations}
 So that $\Delta_{\ket{\phi}}(\hat Z_k)=\bra{\phi}\hat Z_k^2\ket{\phi}-\bra{\phi}\hat Z_k\ket{\phi}^2=1$, which means that all pure states in the support of $\hat \sigma_\epsilon$ have the same local variance $\Delta(\hat Z_k)$. From this, it follows that 
 \begin{equation}
     F^\text{CR}_{\hat\sigma_\epsilon}=\inf_{\{p_i,\ket{\psi_i}\}} \sum_i p_i \frac{\Delta_{\ket{\psi_i}}(\hat H_\text{coll})}{1}=\inf_{\{p_i,\ket{\psi_i}\}} \sum_i p_i \Delta_{\ket{\psi_i}}(\hat H_\text{coll})=\frac{1}{4}\mathcal Q_{\hat\sigma_\epsilon}(\hat H_\text{coll})
 \end{equation}
 where the quantum Fisher information is one-fourth of the convex roof of the variance. Using the previously established results allows us to conclude the proof.
%\color{black}
\section{General $k$-entanglement}\label{appendix: k sep}
In this section, we prove the inequalities of Eq.~(\ref{eq: k sep general ineq}) under the assumptions formulated in Sec.~\ref{subsec: k sep general statement}.

\begin{proof}
 First, the convexity hypothesis on $F$, allows us to consider only pure states, as the extension to mixed states is immediate. If a pure state $\ket{\psi}$ can be decomposed as
\begin{equation}
    \ket{\psi}=\ket{\phi_1}\otimes \cdots\otimes \ket{\phi_l}
\end{equation}
where $\ket{\phi_i}$ is a pure state in $\mathcal H^{\otimes k_i}\cap\Gamma_{k_i}$, the sub-additivity of $F$ shows that
\begin{equation}
    F(\ketbra{\psi})\leq F(\ketbra{\psi_1}+\cdots+F(\ketbra{\psi_l})\leq f(k_1)+\cdots +f(k_l)
\end{equation}
We then need to show that for any finite family $k_1,\dots,k_r$ of positive real numbers, such that $k_i\leq k$ for all $i$ and $k_1+\cdots+k_r=n$, we have 
\begin{equation}
    f(k_1)+\cdots+f(k_r)\leq \left\lfloor \frac{n}{k}\right\rfloor f(k)+f\left(n-\left\lfloor \frac{n}{k}\right\rfloor k\right)\leq \frac{n}{k}f(k)
\end{equation}

For the first inequality, we reason by induction on $r\in\N^*$. For $r=1$, if $k_1\leq k$ is such that $k_1=n$, then $k\geq n$, and the inequality is simply verified, provided $f(0)=0$. For $r>1$, we consider the set
\begin{equation}
    C^n_r=\Big\{(x_1,\dots,x_r)\Big|x_1+\cdots x_r=n\text{ and } 0\leq x_i\leq k\Big\}
\end{equation}
which is a compact convex set, and the function $g_r$ defined on $C^n_r$ by $g_r(x_1,\dots,x_r)=f(x_1)+\cdots+f(x_r)$. Since $f$ is convex, it follows that $g_r$ is also convex. By the Bauer principle (a convex function on a compact convex set attains a maximum on one of its extreme points), we know that $g$ attains a maximum on an extreme point of $C_r^n$. Extreme points of $C^n_r$, are given by points $(k_1,\dots,k_r)$ where at least one $k_i$ is equal to $0$ or $k$. 
\begin{itemize}
    \item If $k_i=0$ then the maximum of $g_r$ on $C^n_r$ is the same as the one of $g_{r-1}$ on $C^n_{r-1}$
    \item If $k_i=k$ then the maximum of $g_r$ on $C^n_r$ is $\max\limits_{C^{n-k}_{r-1}} g_{r-1}+f(k)$
\end{itemize}
In both cases, we conclude by induction.\\

For the second inequality, we notice that it is satisfied if and only if
\begin{equation}
    \frac{f(k)}{k}\geq \frac{f\left(n-\left\lfloor \frac{n}{k}\right\rfloor k\right)}{n-\left\lfloor \frac{n}{k}\right\rfloor k}
\end{equation}
Recall a useful property of convex function, usually called the three-secant inequality: if $x<y<z$ then
\begin{equation}
    \frac{f(y)-f(x)}{y-x}\leq \frac{f(z)-f(x)}{z-x}\leq \frac{f(z)-f(y)}{z-y}
\end{equation}
Using the left inequality for $x=0$, (with $f(0)=0$), $y=n-\left\lfloor \frac{n}{k}\right\rfloor k$ and $z=k$, directly yield the result.
\end{proof}

\section{Popoviciu's inequality}\label{appendix popoviciu's}
In this section, we prove the following: for a Hilbert space $\mathcal H$, a Hamiltonian $\hat H$ with largest and smallest eigenvalues $h_{\max}$, $h_{\min}$, we have
\begin{equation}
    \Delta \hat H\leq \frac{(h_{\max}-h_{\min})^2}{4}
\end{equation}
with equality for states $\ket{\psi}=\frac{1}{\sqrt{2}}\left[\ket{\psi_{\min}}^{\otimes n}+\ket{\psi_{\max}}^{\otimes n}\right]$ where $\ket{\psi_{\min}}$ (resp. $\ket{\psi_{\max}}$) is any eigenstate with eigenvalue $h_{\min}$ (resp. $h_{\max}$). This statement is a direct consequence of a classical probability result known as Popoviciu's inequality \cite{bhatia_better_2000}. In the following, we state and prove it.

\begin{thm}(Popoviciu's inequality)
    For any real random variable $X$ that is bounded (almost surely), if we set $m=\inf X$, $M=\sup M$. (More precisely, we set $m=\sup\{x\in\R|\mathbb P(X\geq x)=1\}$ and $M=\inf\{x\in\R|\mathbb P(X\leq x)=1\}$, which is rigorously not the same thing). Then we have
    \begin{equation}
        \mathbb V(X)\leq \frac{(M-m)^2}{4}
    \end{equation}
    Furthermore, we have equality exactly when $\mathbb P(X=m)=\mathbb P(X=M)=1/2$.
\end{thm}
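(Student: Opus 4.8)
The plan is to exploit the elementary identity that recasts the variance as the second moment about an arbitrary point, and then to center not at the mean but at the \emph{midpoint} of the range. For any constant $c\in\R$ one has
\begin{equation}
    \mathbb{E}\big[(X-c)^2\big]=\mathbb{V}(X)+\big(\mathbb{E}[X]-c\big)^2,
\end{equation}
so that $\mathbb{V}(X)\le \mathbb{E}[(X-c)^2]$ for every $c$, with equality exactly when $c=\mathbb{E}[X]$. The key move is to forgo the optimal center $c=\mathbb{E}[X]$ in favor of $c=\tfrac{m+M}{2}$: this loosens the first step but buys a deterministic pointwise bound on $(X-c)^2$. (Centering instead at the mean, or using the two-copies identity $\mathbb{V}(X)=\tfrac12\mathbb{E}[(X_1-X_2)^2]$, only yields the weaker constant $\tfrac12$ in place of $\tfrac14$, which is why the midpoint choice is essential.)

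The second step is then purely pointwise. Since $m\le X\le M$ almost surely --- which is precisely the content of the essential-infimum and essential-supremum definitions of $m$ and $M$ --- we have $|X-c|\le \tfrac{M-m}{2}$ almost surely for the midpoint $c$, hence $(X-c)^2\le \tfrac{(M-m)^2}{4}$. Taking expectations and chaining with the previous display gives
\begin{equation}
    \mathbb{V}(X)\le \mathbb{E}\big[(X-c)^2\big]\le \frac{(M-m)^2}{4},
\end{equation}
which is the claimed bound.

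For the equality characterization I would track when \emph{both} inequalities above are saturated. The first is an equality iff $\mathbb{E}[X]=c=\tfrac{m+M}{2}$. The second forces $(X-c)^2=\tfrac{(M-m)^2}{4}$ almost surely, i.e.\ $X\in\{m,M\}$ almost surely, so $X$ is supported on two points; writing $p=\mathbb{P}(X=M)$ and $1-p=\mathbb{P}(X=m)$, the mean constraint $pM+(1-p)m=\tfrac{m+M}{2}$ forces $p=\tfrac12$ when $M>m$. Conversely, the fair two-point law on $\{m,M\}$ makes both steps equalities, so the equality set is exactly $\mathbb{P}(X=m)=\mathbb{P}(X=M)=\tfrac12$.

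The argument is short, and the only genuine care required is measure-theoretic bookkeeping: justifying the almost-sure bounds consistently with the chosen definitions of $m$ and $M$, and setting aside the degenerate case $M=m$, where $X$ is constant, both sides vanish, and the stated two-point equality condition does not literally apply. The operator statement of this appendix --- the bound $\Delta\hat H\le \tfrac{(h_{\max}-h_{\min})^2}{4}$ and its saturating states --- then follows at once by applying the theorem to the classical random variable associated with $\hat H$ through the spectral-space correspondence of Section~\ref{subsec: spectral space}, where $m=h_{\min}$ and $M=h_{\max}$ and the fair two-point law translates into the stated eigenvector superpositions.
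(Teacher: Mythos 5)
Your proof is correct, but it follows a genuinely different route from the paper's. The paper starts from the almost-sure nonnegativity of $(X-m)(M-X)$, takes expectations to obtain the intermediate bound $\mathbb{V}(X)\leq (\mathbb{E}(X)-m)(M-\mathbb{E}(X))$ (which is the sharper Bhatia--Davis inequality, the subject of the very reference the paper cites), and only then loosens it via $ab\leq (a+b)^2/4$; equality is characterized by tracking the two-point support condition from the first step and the condition $a=b$ from the second. You instead chain the bias--variance identity $\mathbb{E}[(X-c)^2]=\mathbb{V}(X)+(\mathbb{E}[X]-c)^2$ with the deterministic pointwise bound $(X-c)^2\leq (M-m)^2/4$ obtained by centering at the midrange $c=(m+M)/2$. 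Both arguments are elementary and arrive at the same two saturation conditions (mean at the midrange, support in $\{m,M\}$), but they decompose the loss differently: the paper's route passes through the state-dependent Bhatia--Davis bound as a byproduct, which is strictly stronger than Popoviciu whenever $\mathbb{E}(X)\neq (m+M)/2$, whereas your route trades that refinement for a shorter, more transparent chain in which each inequality has an immediate geometric meaning. Two minor points in your favor: you are explicit that $m\leq X\leq M$ almost surely needs the limiting argument implicit in the essential-infimum/supremum definitions, and you flag the degenerate case $M=m$, where equality holds but the stated two-point condition cannot, a case the paper's proof passes over silently.
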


\begin{proof}
    Since we have $\mathbb P(m\leq X\leq M)=1$, the random variable $(X-m)(M-X)$ is non-negative almost surely. It thus has a non-negative expectation value
    \begin{equation}
        \mathbb E\big[(X-m)(M-X)\big]\geq 0
    \end{equation}
    Expanding the expectation value and subtracting $\mathbb E[X]^2$ on both sides gives
    \begin{subequations}
        \begin{align}
            &\mathbb E\big[(X-m)(M-X)\big]\geq 0\\
            \Rightarrow& \mathbb E\big[MX-X^2-mM+mX\big]\geq 0\\
            \Rightarrow& M\mathbb E(X)-\mathbb E(X^2)-mM+m\mathbb E(X)\geq 0\\
            \Rightarrow& \mathbb E(X^2)-\mathbb E(X)^2\leq M\mathbb E(X)-mM+m\mathbb E(X)-\mathbb E(X)^2\\
            \Rightarrow& \mathbb V(X)\leq (\mathbb E(X)-m)(M-\mathbb E(X))
        \end{align}
    \end{subequations}
    Finally we can use the classic inequality $ab\leq \frac{(a+b)^2}{4}$ for $a=\mathbb E(x)-m$ and $b=M-\mathbb E(X)$ to get
    \begin{equation}
        \mathbb V(X)\leq \frac{(\mathbb E(X)-m+M-\mathbb E(X))^2}{4}=\frac{(M-m)^2}{4}
    \end{equation}
    To understand the case of equality, we simply have to look at the two steps where we put an inequality. First, we said that the random variable $(X-m)(M-X)$ is non-negative thus, its expectation value has to be non-negative. However, it is a well know result, that for a non-negative random variable $Y$ if $\mathbb E(Y)=0$ then $\mathbb P(Y=0)=1$. In our case, this means that
    \begin{equation}
        \mathbb P[(X-m)(M-X)=0]=1
    \end{equation}
    As a product is zero if and only if one of the factors is zero, we know that almost surely either $X=m$ or $X=M$. Denoting $p=\mathbb P[X=m]$ we get that $\mathbb P[X=M]=1-p$.\\
    Now, we can consider the second inequality we used: $ab\leq (a+b)^2/4$. It is easy to see that we have equality only when $a=b$, which imposes
    \begin{equation}
        \mathbb E(X)-m=M-\mathbb E(X)\Rightarrow \mathbb E(X)=\frac{m+M}{2}
    \end{equation}
    On the other hand, we can compute the expectation value since we know the exact distribution of $X$
    \begin{equation}
        \mathbb E(x)=m\mathbb P[x=m]+M\mathbb P[X=M]=pm+(1-p)M
    \end{equation}
    Asking that both expressions are the same, from which we can solve $p=1/2$.
\end{proof}

\end{document}